\begin{document}
\title{Spectral and Energy Efficiency of Multi-pair Massive MIMO Relay Network with Hybrid Processing}
\author{\IEEEauthorblockN{Wei~Xu,~\IEEEmembership{Senior~Member,~IEEE}, Jian~Liu,  Shi~Jin,~\IEEEmembership{Member,~IEEE}, and Xiaodai~Dong,~\IEEEmembership{Senior~Member,~IEEE}}\\
\thanks{Manuscript received May 2, 2016; revised September 27, 2016, January 21, 2017, and April 12, 2017; accepted June 6, 2017. This work of was supported in part by the 973 program under grant No. 2013CB329204, the NSFC under grant 61471114, the Six Talent Peaks project in Jiangsu Province under GDZB-005, and the Open Research Fund of the State Key Lab of ISN under ISN18-03. The work of S. Jin was supported in part by the NSFC for Distinguished Young Scholars of China with Grant 61625106. The work of X. Dong was supported in part by the NSERC of Canada under Grant 261524. The associate editor coordinating the review of this paper and approving it for publication was Dr. L. Liu. \emph{(Corresponding
author: W. Xu.)}}
\thanks{W. Xu, J. Liu, and S. Jin are with the National Mobile Communications Research Laboratory, Southeast University, Nanjing 210096, China (email:~\{wxu, liu\_jian, jinshi\}@seu.edu.cn). W. Xu is also with the State Key Lab of Integrated Services Networks, Xidian University, Xi'an, China.}
\thanks{X. Dong is with the Department of Electrical and Computer Engineering, University of Victoria, Canada (email: xdong@ece.uvic.ca).}
\thanks{Part of this work was presented at the WCNC 2016 in Doha, Qatar.}
\thanks{Color versions of one or more of the figures in this paper are available online
at http://ieeexplore.ieee.org.}
}

\maketitle

\newtheorem{mylemma}{Lemma}
\newtheorem{mytheorem}{Theorem}
\newtheorem{mypro}{Proposition}
\newtheorem{mycor}{Corollary}
\newtheorem{myrem}{Remark}

\begin{abstract}
We consider a multi-pair massive multiple-input multiple-output (MIMO) relay network, where the relay is equipped with a large number, $N$, of antennas, but driven by a far smaller number, $L$, of radio frequency (RF) chains. We assume that $K$ pairs of users are scheduled for simultaneous transmission, where $K$ satisfies $2K=L$. A hybrid signal processing scheme is presented for both uplink and downlink transmissions of the network. Analytical expressions of both spectral and energy efficiency are derived with respect to the RF chain number under imperfect channel estimation. It is revealed that, under the condition $N>\left\lfloor 4L^2/\pi \right\rfloor$, the transmit power of each user and the relay can be respectively scaled down by $1/\sqrt{N}$ and $2K/\sqrt{N}$ if pilot power scales with signal power, or they can be respectively scaled down by $1/N$ and $2K/N$ if the pilot power is kept fixed, while maintaining an asymptotically unchanged spectral efficiency (SE). While regarding energy efficiency (EE) of the network, the optimal EE is shown to be achieved when $P_r = 2K P_s$, where $P_r$ and $P_s$ respectively refer to the transmit power of the relay and each source terminal. We show that the network EE is a quasi-concave function with respect to the number of RF-chains which, therefore, admits a unique globally optimal choice of the RF-chain number. Numerical simulations are conducted to verify our observations.
\end{abstract}

\begin{IEEEkeywords}
Massive MIMO relay, hybrid processing, spectral efficiency (SE), energy efficiency (EE), limited RF chain.
\end{IEEEkeywords}

%
\IEEEpeerreviewmaketitle

\section{Introduction}
Massive multiple-input multiple-output (MIMO) has recently attracted much attention from both academia and industry \cite{larsson2014massive,rusek2013scaling,zhu2014secure}. When tens and hundreds of antennas are deployed at base station (BS), massive MIMO can provide significant performance enhancement \cite{marzetta2010noncooperative,ngo2013multicell,zhang2013large} in terms of both spectral efficiency (SE) and energy efficiency (EE). With the massive MIMO setup, random channel vectors between users and BS become asymptotically orthogonal to each other \cite{rusek2013scaling}. Moreover, both uncorrelated noise and the intra-cell interference disappear in the limit of an infinite number of antennas, even with simple matched filter at BS \cite{marzetta2010noncooperative}. Therefore, simple linear beamforming techniques, such as matched filtering, are capable of approaching the multiuser MIMO capacity expected by complicated non-linear precoding, namely dirty paper coding (DPC) \cite{ngo2013energy}.

Enhancing network performance with relays, e.g. in LTE Release 10 and beyond \cite{LTE-Book}, has emerged as an effective technique to expand cell coverage \cite{jin2010ergodic,gao2008channel,gao2009optimal}. Precoding designs and performance analysis of MIMO relay networks have been extensively investigated \cite{hammerstrom2007power,lee2010joint,xing2010robust}. Encouraged by the impressive merits of massive MIMO, relays equipped with massive antennas are naturally becoming attractive \cite{suraweera2013multi,ngo2014multipair,cui2014multi,jin2015ergodic}. Multi-pair one-way relaying with a large antenna array was studied in \cite{ngo2014multipair}, taking both ZF (zero-forcing) and MRC/MRT (maximum-ratio combining/maximum-ratio transmission) precoding into consideration. In \cite{cui2014multi,jin2015ergodic}, two-way relaying with massive antennas was analyzed with respect to the achievable rate, and power efficiency was also accordingly characterized under some typical scenarios.

Generally, these existing works focused on massive MIMO array with full radio frequency (RF) chains, where each antenna element is supported by a dedicated RF chain. Under the common MIMO implementation setup, precoding is entirely realized in the digital domain to suppress interference between data streams. However, the number of RF chains being exactly equal to the number of antennas becomes ``unacceptable'' in the massive MIMO setup. The tremendous RF chains bring prohibitively high energy consumption and hardware complexity \cite{heath2015overview,gao2015energy}. In order to address the problem, practical RF chain constraints has recently been considered. With only a limited number of RF chains, however, it would be hard to implement full dimensional digital precoding for massive antenna elements. Therefore, hybrid digital and analog precoding design is necessary for massive MIMO with limited RF chains \cite{liang2014low,ni2015near,ni2015hybrid,longbaole}.

Specifically for hybrid precoding, transmit signals are firstly precoded by a low dimensional digital precoding, and then analog (phase-only) precoding with a high dimension is enabled by using cost-effective analog phase shifters (APSs) \cite{liang2014low}. Under the practical RF-chain constraint, \cite{ni2015near} considered multi-stream transmission in point-to-point (P2P) massive MIMO and proposed a near-optimal matrix decomposition based hybrid precoding (MD-HP). Downlink transmission for multiuser massive MIMO with limited RF chains was studied in \cite{liang2014low} and \cite{ni2015hybrid}. The former assumed single-antenna users while the latter further considered the scenario with multi-antenna users.
\cite{longbaole} investigated the hybrid precoding design for multiuser MIMO downlink under frequency selective channels.
Note that the hybrid digital and analog processing is found particularly suitable for mmWave MIMO communications as it effectively reduces cost and power consumption of high-bandwidth mixed-signal devices \cite{Ayach2012capacity,Ayach2014Spatially}. In \cite{liang2014approach}, a beam steering scheme was proposed by aligning the beam for each user towards its strongest path at the analog domain. Asymptotic analysis showed that the proposed scheme was able to achieve the performance imposed by full-dimensional baseband ZF precoding.

The aforementioned literature concerning massive MIMO with hybrid beamforming mostly assumed fixed $L$ with $L\ll N$, while the restrictive condition between RF-chain number, $L$, and number of antennas, $N$, is unclear under the RF-chain constraint. Note that the relationship between the number of data streams, $N_s$, and the number of RF chains, $L$, has been well established in \cite{Sohrabi2016hybrid}. It was revealed that, if $L$ is equal or larger than $2N_s$, the hybrid beamforming scheme could achieve exactly the same performance as any full digital beamformer. In this paper, we explicitly give the relationship between the number of RF chains and the number of antennas, which is not a fundamental limit of the system but is originated from a technical condition later in the derivations and yields useful insights for multiuser interference cancelation. Moreover, energy efficiency (EE) of the system is rarely investigated in these studies. In this paper, we analyze both SE and EE of a multi-pair two-way relay network with a large antenna array at the relay while only limited RF chains are available for transmission and reception. We propose a hybrid precoding scheme for both uplink and downlink relaying. The main contributions of this paper are summarized as follows:
\begin{itemize}
\item An analytical expression is derived for characterizing the network SE in the presence of imperfect channel estimation. According to the derived result, power scaling laws are revealed for the RF-chain constrained system. Moreover, we explicitly derive the constraint between $L$ and $N$ as $N>\left\lfloor 4L^2/\pi \right\rfloor$. Even though this condition originates from a technical requirement in derivations, it yet could be interpreted as a sufficient condition to ensure some asymptotic system performance target. More specifically, the condition reveals that if the number of antennas $N$ is larger than $4L^2/\pi$, the multiuser interference can be effectively mitigated with the proposed hybrid processing scheme.
\item By considering the power consumption including transmit power, RF-chain power consumption as well as the power of massive APSs used for analog beamforming, we investigated the EE of the network. The optimal power allocation strategy for EE maximization is discovered as $P_r=2KP_s$, where $P_s$ and $P_r$ respectively represent the transmit power of a single source terminal and the relay.
\item Given the number of available RF-chains, there exists a unique globally optimal choice of transmit power, $P_s^*$, for maximizing the network EE. The relationship between the maximal EE and the corresponding SE is discovered. Especially under the case of perfect channel estimation, the maximal EE, $\eta_{EE}^*$, and its corresponding SE, $\eta_{SE}^*$, follows $\log(\eta_{EE}^*)=-\frac{\log2}{K}\eta_{SE}^* + c$ where $c$ is a constant depending on system parameters.
\end{itemize}

The rest of the paper is organized as follows. Section~\ref{sec:system_model} describes the system model of the RF-chain constrained massive MIMO relay network. Section~\ref{sec:rate} investigates the SE performance as well as power scaling law of the network. Section~\ref{sec:EE} characterizes the EE performance and presents insightful observations on tradeoffs between SE and EE. Simulation results are shown in Section~\ref{sec:simulation} before concluding remarks drawn in Section~\ref{sec:conclusion}.

\emph{Notations:} Throughout the paper, $\|\cdot\|_F$, $(\cdot)^T$, $(\cdot)^*$ and $(\cdot)^H$ represent the Frobenius norm, transpose, conjugation and Hermitian of a matrix, respectively. $\|\cdot\|$ denotes the Euclidean norm of a vector. $(\cdot)^{-1}$ returns the inverse of an invertible matrix. $\mathbb{E}\left\{\right\}$ and $\mathbb{V}\left\{\right\}$ take expectation and variance, respectively, while $\Pr()$ is the probability of an event. Operator $\text{arg}(\cdot)$ takes the angle of a complex number and $\text{diag}(\cdots)$ returns a (block) diagonal matrix with diagonal elements listed in the parentheses. Additionally, $\left\lfloor \cdot \right\rfloor$ means rounding down to the nearest integer.

\section{System Model}\label{sec:system_model}

We consider a multi-pair two-way relay network, where multiple pairs of single-antenna users exchange data within each pair via the help of an $N$-antenna relay. The relay is equipped with a massive antenna array. For massive MIMO, a large number of RF chains puts a heavy burden on the hardware cost and energy consumption. Besides, though the size of massive antenna array could be reduced by utilizing higher frequencies, the space occupied by the massive RF circuits may still prevent the device size to be made small enough. Therefore, an alternative way is to let the massive antenna array driven by a far smaller number of RF chains, say $L\ll N$, for the sake of efficient implementation. In this way, both hardware cost and power consumption can be significantly reduced. For the multiuser network, the number of overall users can be large and varying  which makes user scheduling necessary before transmission. We assume that $K$ user pairs are selected to be served simultaneously from the active user pool. Let the number of scheduled users be $2K=L$. We use $(k,k')$ to denote the user pair $k$ and $k'$ who exchange information with each other. For instance, the $j$-th communication pair is indexed as $(2j-1,2j)$, $j=1,2,\cdots,K$. Direct link between each pair $(k,k')$ is ignorable due to severe path loss and heavy shadowing.

\begin{figure}[htbp]
\centering
\subfigure[MAC phase: hybrid detection;]{
\begin{minipage}{0.8\linewidth}
\centering
\includegraphics[width=0.8\textwidth,height=0.3\textwidth]{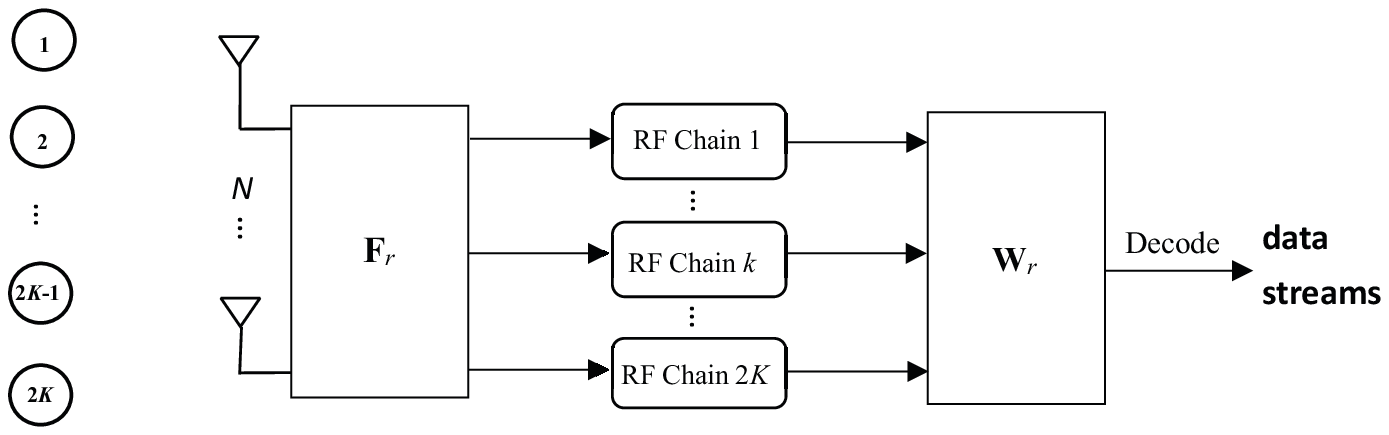}
\end{minipage}
}
\subfigure[BC phase: hybrid precoding.]{
\begin{minipage}{0.8\linewidth}
\centering
\includegraphics[width=0.8\textwidth,height=0.3\textwidth]{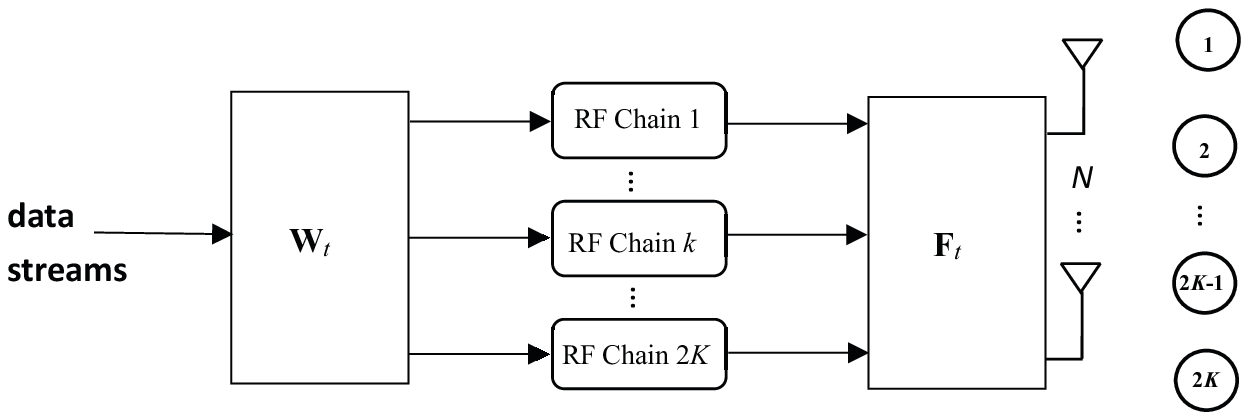}
\end{minipage}
}
\caption{RF-chain constrained multi-pair two-way relay network using hybrid precoding and detection.}
\label{Fig:System_Model}
\end{figure}

Assume that the relay and all users operate in time division duplex (TDD) mode and hence channel reciprocity can be ensured for the massive MIMO setup. All channels between the relay and users follow independent and identically distributed (i.i.d.) Rayleigh fading. The uplink channel from the $j$-th user to the relay is denoted as $\mathbf{g}_{j}\in\mathbb{C}^{N\times 1}$ whose entries are i.i.d. $\mathcal{CN}(0,\beta_j)$ where $\beta_j$ represents large scale fading. Then the uplink channel from all users to the relay is represented as $\mathbf{G}=[\mathbf{g}_{1}~\mathbf{g}_{2}\cdots\mathbf{g}_{2K-1}~\mathbf{g}_{2K}]$. Equivalently, the uplink channel matrix can be expressed as $\mathbf{G}=\mathbf{H}\mathbf{\Lambda}^{\frac{1}{2}}$ where $\mathbf{H}\in\mathbb{C}^{N\times2K}$ corresponds to small scale fading with all entries i.i.d. $\mathcal{CN}(0,1)$ and $\mathbf{\Lambda}\in\mathbb{C}^{2K\times2K}$ is a diagonal matrix representing large scale fading with $\mathbf{\Lambda}(j,j)=\beta_{j}$.

As illustrated in Fig. \ref{Fig:System_Model}, data transmission of the multi-pair two-way relay network is realized in two phases.

\noindent 1) Multiple access (MAC) phase: At time instant $t$, all users simultaneously transmit their independent signals to the relay. The received signal at the relay is given as
\begin{equation}\label{eq:relayreceive}
\mathbf{y}_R[t]=\sqrt{P_s}\mathbf{G}\mathbf{x}[t]+\mathbf{n}_R[t],
\end{equation}
where $\mathbf{x}=[x_1~x_2\cdots x_{2K-1}~x_{2K}]^T$ and $x_j$ is the transmit signal of the $j$-th user. It is assumed that all signals are normalized as $\mathbb{E}\{|x_j[t]|^2\}=1$ and each user takes transmit power $P_s$. $\mathbf{n}_R[t]\sim\mathcal{CN}(\mathbf{0},\mathbf{I}_N)$ is the additive white Gaussian noise (AWGN) at the relay.

\noindent 2) Broadcast (BC) phase: At time instant $t+1$, the relay broadcasts $\mathbf{s}[t+1]\in\mathbb{C}^{N\times1}$ to all users. Assume $\mathbb{E}\left\{\|\mathbf{s}[t+1]\|^2\right\}=1$ and denote $P_r$ as the average transmit power of relay. Applying channel reciprocity with calibration, the downlink channel from the relay to the $k'$-th user equals $\mathbf{g}_{k'}^T$.
Thus the received signal at the $k'$-th user is given as
\begin{align}
\mathbf{y}_{k'}[t+1]=&\sqrt{P_r}\mathbf{g}_{k'}^T\mathbf{s}[t+1]+n_{k'}[t+1],\label{eq:received}
\end{align}
where $n_{k'}[t+1]\sim\mathcal{CN}(0,1)$ is the AWGN at user $k'$.

\subsection{Channel Estimation}\label{sec:channel_est}

In a conventional MIMO relay system with full RF chains, channel estimation is conducted at the relay through orthogonal uplink pilots \cite{ngo2013energy}. However, in a hybrid structure, each RF chain is connected to multiple antennas. The traditional orthogonal estimation method could not be used any more since each RF chain receives the sum of signals from its coupled antennas, which is inseparable. So far as we know, efficient channel estimation is an interesting but still open problem for the hybrid system with limited RF chains. Most existing works on hybrid precoding design, like \cite{gao2015energy,ni2015near,liang2014low,ni2015hybrid,longbaole}, explicitly assumed perfect channel estimation. Though few have claimed that channel estimation with high accuracy can be obtained by exploiting the sparse property of mmWave channels \cite{Ayach2014Spatially,alkhateeb2014channel}, it should be pointed out that this kind of channel estimation has not been shown effective in general cases.

In this paper, we consider a simple round-robin estimation method. During each round, $2K=L$ relay antennas are chosen from the $N$ antennas and trained via the $L$ RF chains. To assist the heuristic channel estimation with limited RF chains, we may use a single fully-connected architecture by letting each phase shifter be controlled by an on-off switcher. Since we only have $L$ RF chains while $N$ channel coefficients are needed to be estimated from each user, we can estimate $L$ channel coefficients from the $N$ ones during each estimation phase and repeat for $\lceil N/L \rceil$ times in order to get all channel coefficients estimated. For data transmission, we simply let all the switches be ``on''. Since the corresponding channel coefficient is estimated when the corresponding switcher is ``on'' (the switchers connecting to other components that are not under estimation during this phase are ``off''), the obtained channel estimate is the same as the channel state of the same link when data transmission with the switchers definitely being ``on''. Though the method could be resource consuming and cost $N/L$ times more overhead, it is easy for implementation and provides a tractable approach for full-dimensional channel estimation with only a small number of RF chains. Even though some analytical results in our study could somewhat rely on the adopted channel estimation strategy, the performance characterization of the hybrid system could be further improved once a more efficient estimation method with low overhead emerges.

Denote $\hat{\mathbf{G}}$ as the estimate of $\mathbf{G}$ and $\bm{\mathcal{E}}_G=\mathbf{G}-\hat{\mathbf{G}}$ as the estimation error. Use $\hat{\mathbf{g}}_{j}$ and $\mathbf{\mathcal{E}}_{j}$ to represent the $j$-th column of $\hat{\mathbf{G}}$ and $\bm{\mathcal{E}}_G$, respectively. According to the property of MMSE estimation, we know that $\hat{\mathbf{G}}$ and $\bm{\mathcal{E}}_G$ are mutually independent \cite{estimationtheory} and
$$\hat{\mathbf{g}}_{j}\sim\mathcal{CN}(\mathbf{0},\sigma^2_{j}\mathbf{I}_N),~\mathbf{\mathcal{E}}_{j}\sim\mathcal{CN}(\mathbf{0},\varepsilon^2_{j}\mathbf{I}_N),$$
where
\begin{equation}\label{eq:channelestimation}
\sigma^2_{j}=\frac{\tau P_p\beta^2_{j}}{\tau P_p\beta_{j}+1},
~\varepsilon^2_{j}=\beta_{j}-\sigma^2_{j},
\end{equation}
where $P_p$ is the average power of pilot symbols and $\tau$ is the length of pilot sequences.

\subsection{Hybrid Processing at Relay}

In conventional massive MIMO systems, each antenna element is supported by a dedicate RF chain for digital signal processing. It has been shown in literature, like \cite{marzetta2010noncooperative,larsson2014massive,ngo2013energy}, that simple linear but full digital precoding schemes as ZF precoding is able to achieve asymptotically optimal performance in multiuser massive MIMO. While for a limited number of available RF chains, i.e., $L \le N$ as considered in this study, we have to exploit hybrid digital and analog processing techniques.

\subsubsection{Hybrid ZF Detection}

As shown in Fig.~\ref{Fig:System_Model}, the receiving matrix of the relay is made up of successive analog and digital processing components, denoted by $\mathbf{F}_{r}$ and $\mathbf{W}_{r}$, respectively. To specify, only phase rotations can be made through $\mathbf{F}_{r}$ while both amplitude and phase modifications are feasible by $\mathbf{W}_{r}$. Note that we exploit the hybrid processing approach presented in the previous \cite{liang2014low} which has shown to be asymptotically optimal with respect to SE in the one-hop massive MIMO. Accordingly, $\mathbf{F}_{r}$ is designed by extracting the phases of $\hat{\mathbf{G}}$:
\begin{equation}\label{eq:analogprecoder}
[\mathbf{F}_{r}]_{i,j}=\frac{1}{\sqrt{N}}e^{j\phi_{i,j}},
\end{equation}
where $\phi_{i,j}$ is the phase of the $(i,j)$-th element of $\hat{\mathbf{G}}^H$. It should be pointed out that for the case $2K<L$, the heuristic design in \eqref{eq:analogprecoder} could not be directly applicable, since dimensions of $\mathbf{F}_r$ and $\hat{\mathbf{G}}$ do not match. Therefore, for $2K<L$, we cannot design the analog precoding by direct channel phase extraction. A simple solution is to select $L'= 2K$ out of the $L$ RF chains, which
reduces to the trivial case. Otherwise, we could design the remaining $L-2K$ columns of $\mathbf{F}_r$ resorting to random phases as in \cite{Zhu2016Secure}, or through sophisticated optimization methods like in \cite{Yu2016alternating}, which in general are less likely to yield tractable expressions for performance analysis. Hence, in this paper, we focus on the case $2K=L$ and will show its asymptotic optimality under the considered scenario.

Since only channel estimate is available, the relay treats $\hat{\mathbf{G}}$ as the true channel. It considers $\mathbf{F}_{r}\hat{\mathbf{G}}\in\mathbb{C}^{2K\times2K}$ as the equivalent uplink channel seen from baseband and it generates the digital precoder as $\mathbf{W}_r=[\mathbf{F}_r\hat{\mathbf{G}}]^{-1}$ based on the popular ZF design. By applying hybrid ZF detection, the received signal $\mathbf{y}_R[t]$ is separated into $2K$ streams as follows:
\begin{equation}
\mathbf{r}[t]=\sqrt{P_s}\mathbf{W}_r\mathbf{F}_r\mathbf{G}\mathbf{x}[t]+\mathbf{W}_r\mathbf{F}_r\mathbf{n}_R[t].
\end{equation}
The $k$-th stream ($k$-th element) of $\mathbf{r}[t]$ is extracted for detecting $x_k[t]$ from the $k$-th user:
\begin{align}\label{eq:k-thelement}
r_{k}[t]&=\sqrt{P_s}\mathbf{w}_{k}^T\mathbf{F}_r\mathbf{g}_{k} x_{k}[t]+\sqrt{P_s}\sum_{j\neq k}^{2K}\mathbf{w}_{k}^T\mathbf{F}_r\mathbf{g}_{j} x_{j}[t]\nonumber\\
&+\mathbf{w}_{k}^T\mathbf{F}_r\mathbf{n}_R[t],
\end{align}
where $\mathbf{w}_{k}^T$ represents the $k$-th row of $\mathbf{W}_r$.

\subsubsection{Hybrid ZF Transmission}

After detecting signals from users, the relay multiplies them with a digital precoding matrix $\mathbf{W}_t\in\mathbb{C}^{2K\times 2K}$ and an analog precoder $\mathbf{F}_t\in\mathbb{C}^{N\times 2K}$ before broadcasting them to all users. Applying channel reciprocity and following the ZF precoding design, the downlink analog and digital precoders are respectively given as
\begin{equation}
\mathbf{F}_{t}={\mathbf{F}_r^T},~\mathbf{W}_t=\left[\hat{\mathbf{G}}^T\mathbf{F}_t\right]^{-1}\mathbf{P},
\end{equation}
 where $\mathbf{P}$ is a permutation matrix introduced to ensure that signal from the $k$-th user arrives at its corresponding pair $k'$. To be exact, $\mathbf{P}$ is a block diagonal matrix, defined as $\mathbf{P} =\text{diag}(\mathbf{P}_1,\cdots,\mathbf{P}_K),$
where $\mathbf{P}_i=
\begin{bmatrix}
0&1\\
1&0
\end{bmatrix}$.
It is directly verified that $\mathbf{W}_t={\mathbf{W}_r^T}\mathbf{P}.$

Assume that there exists a processing delay of $d$ symbols at the relay. The transmit signal of the relay is
\begin{equation}\label{eq:relaytransmit}
\mathbf{s}[t+1]=\mu\mathbf{F}_t\mathbf{W}_t\mathbf{x}[t-d],
\end{equation}
where $\mu$ is a normalization factor to satisfy average transmit power constraint, i.e., $\mathbb{E}\{\|\mathbf{s}[t+1]\|^2\}=1$. Then, according to \eqref{eq:received}, the received signal at the $k'$-th user equals
\begin{align}\label{eq:received1}
\mathbf{y}_{k'}[t+1]&=\mu\sqrt{P_r}\mathbf{g}_{k'}^T\mathbf{F}_t\mathbf{v}_{k}x_{k}[t-d]+n_{k'}[t+1]\nonumber\\
&+\sum_{j\neq k}^{2K}\mu\sqrt{P_r}\mathbf{g}_{k'}^T\mathbf{F}_t\mathbf{v}_{j}x_{j}[t-d],
\end{align}
where $\mathbf{v}_{k}$ is the $k$-th column of $\mathbf{W}_t$ and $n_{k'}[t+1]\sim\mathcal{CN}(0,1)$ is the AWGN at the $k'$-th user.

\subsection{Quantized Phase Shifters}
In \eqref{eq:analogprecoder}, it is assumed that ideal phase shifters are available which perfectly yield continuous phases without quantization. However, the implementation of such shifters is less feasible, or at least, too expensive due to hardware limitations. More realistic phase shifters are also discussed in a later part of this study. To be specific, quantized phase of each entry of $\mathbf{F}_r$ is chosen from the codebook $\bm{\Phi}$ based on the closest Euclidean distance.
\begin{equation}
\bm{\Phi}=\left\{0,\frac{2\pi}{2^B},\frac{2\cdot 2\pi}{2^B},\frac{3\cdot 2\pi}{2^B},\cdots,\frac{(2^B-1)\cdot 2\pi}{2^B}\right\},
\end{equation}
where $B$ denotes the number of quantization bits.

\section{Spectral Efficiency with Limited RF-chains}\label{sec:rate}
In this section, we analyze SE of the two-way massive MIMO relay network in terms of the achievable sum rate with limited RF chains for large $N$. Analytical expressions on power scaling laws are also presented.
\subsection{Analysis of Achievable Sum Rate}
Denote $R_{k\rightarrow k'}$ as the ergodic achievable rate for the transmission link $k\rightarrow R~(\text{Relay}) \rightarrow k'$. The sum rate of the network is expressed as
\begin{equation}\label{eq:total_rate}
R_{sum}=\sum\limits_{k\rightarrow k'} R_{k\rightarrow k'},
\end{equation}
where the $k$-th user and the $k'$-th user constitute a communication (user) pair. Recall that there are overall $2K$ transmission links.
Without loss of generality, we focus on the transmission link $k\rightarrow R \rightarrow k'$. Define $R_{k\rightarrow k'}$ as:
\begin{equation}
R_{k\rightarrow k'}=\frac{1}{2}\min\left\{R_{k\rightarrow R},R_{R\rightarrow k'}\right\},
\end{equation}
where $\frac{1}{2}$ exists because the transmission occupies two time slots and $\min\left\{\cdot,\cdot\right\}$ returns the minimum of two values. Let $R_{k\rightarrow R}$ and $R_{R\rightarrow k'}$ stand for the ergodic achievable rates of the two transmission links $k\rightarrow R$ and $R\rightarrow k'$, respectively. It follows:
\begin{align}
R_{k\rightarrow R}&=\mathbb{E}\left\{\log_2(1+\gamma_{k\rightarrow R})\right\},\nonumber\\
R_{R\rightarrow k'}&=\mathbb{E}\left\{\log_2(1+\gamma_{R\rightarrow k'})\right\},\nonumber
\end{align}
where, from \eqref{eq:k-thelement} and \eqref{eq:received1}, we have
\begin{align}
\gamma_{k\rightarrow R}=&\frac{P_s|\mathbf{w}_{k}^T\mathbf{F}_r\mathbf{g}_{k}|^2}{P_s\left(\sum_{j\neq k}^{2K}|\mathbf{w}_{k}^T\mathbf{F}_r\mathbf{g}_{j}|^2\right)+\|\mathbf{w}_{k}^T\mathbf{F}_r\|^2}, \nonumber\\
\gamma_{R\rightarrow k'}&=\frac{P_r|\mathbf{g}_{k'}^T\mathbf{F}_t\mathbf{v}_{k}|^2}{P_r\left(\sum_{j\neq k}^{2K}|\mathbf{g}_{k'}^T\mathbf{F}_t\mathbf{v}_{j}|^2\right)+\frac{1}{\mu^2}}.\nonumber
\end{align}

It is difficult to give precise closed-form expressions of $R_{k\rightarrow R}$ and $R_{R\rightarrow k'}$. However, as the relay is equipped with a massive antenna array, the received signals almost surely converge to their expectation according to the law of large numbers. Hence, we follow a popular methodology like in \cite{hassibi2003much}, and rewrite the received signal as the mean plus an additive uncorrelated ``effective'' noise term. It yields
\begin{align}
r_{k}[t]=\sqrt{P_s}\mathbb{E}\left\{\mathbf{w}_{k}^T\mathbf{F}_r\mathbf{g}_{k}\right\} x_{k}[t]+\tilde{n}_{k}[t],
\end{align}
where
\begin{align}
\tilde{n}_{k}[t]&=\sqrt{P_s}(\mathbf{w}_{k}^T\mathbf{F}_r\mathbf{g}_{k}-\mathbb{E}\left\{\mathbf{w}_{k}^T\mathbf{F}_r\mathbf{g}_{k}\right\}) x_{k}[t]\nonumber\\
&+\sqrt{P_s}\sum_{j\neq k}^{2K}\mathbf{w}_{k}^T\mathbf{F}_r\mathbf{g}_{j} x_{j}[t]+\mathbf{w}_{k}^T\mathbf{F}_r\mathbf{n}_R[t]
\end{align}
is considered as the effective noise. This methodology has been widely applied in massive MIMO due to the following considerations: 1) it yields a tractable rate expression, which is a lower bound of the rate; 2) it does not require instantaneous CSI at the receiver. Only statistical CSI is required. It is well-known from \cite{cover2012elements} that the worst-case uncorrelated additive noise is independent Gaussian with the same variance. By treating $\tilde{n}_{k}[t]$ as the worst-case noise, we obtain
\begin{equation}\label{eq:uplinkapprox}
\tilde{R}_{k\rightarrow R}=\log_2\left(1+\frac{P_s|\mathbb{E}\left\{\mathbf{w}_{k}^T\mathbf{F}_r\mathbf{g}_{k}\right\}|^2}{P_s\mathbb{V}\{\mathbf{w}_{k}^T\mathbf{F}_r\mathbf{g}_{k}\}+\text{MP}_{k}+\text{AN}_{k}}\right),
\end{equation}
where $\text{MP}_{k}$ and $\text{AN}_{k}$ refer to the multi-pair interference and additive noise effects, respectively, given by
\begin{align}
\text{MP}_{k}=P_s\sum_{j\neq k}^{2K}\mathbb{E}\left\{|\mathbf{w}_{k}^T\mathbf{F}_r\mathbf{g}_{j}|^2\right\}, ~~ \text{AN}_{k}=\mathbb{E}\{\|\mathbf{w}_{k}^T\mathbf{F}_r\|^2\}.
\end{align}

Following the similar procedures and using \eqref{eq:received1}, we can also obtain
\begin{align}\label{eq:received2}
\tilde{R}_{R\rightarrow 'k}=&\log_2\bigg(1+\nonumber\\
&\frac{P_r|\mathbb{E}\left\{\mathbf{g}_{k'}^T\mathbf{F}_t\mathbf{v}_{k}\right\}|^2}{P_r\left[\mathbb{V}\{\mathbf{g}_{k'}^T\mathbf{F}_t\mathbf{v}_{k}\}+\sum_{j\neq k}^{2K}\mathbb{E}\left\{|\mathbf{g}_{k'}^T\mathbf{F}_t\mathbf{v}_{j}|^2\right\}
\right]+\frac{1}{\mu^2}}\bigg).
\end{align}
Thus for the massive MIMO relay network with large $N$, the analytical expression for $R_{k\rightarrow k'}$ can be characterized as
\begin{equation}\label{eq:rateappro}
R_{k\rightarrow k'}=\frac{1}{2}\min\left\{\tilde{R}_{k\rightarrow R},\tilde{R}_{R\rightarrow k'}\right\}.
\end{equation}
Now we are ready to evaluate the sum rate with $2K = L \ll N$ in the following theorem.
\begin{mytheorem}\label{theo:sumrate}
With $N>\left\lfloor 4L^2/\pi \right\rfloor$ satisfied, the ergodic achievable rate for the two-way relay network with $L$ RF chains can be characterized as $R_{sum}=\sum\limits_{k\rightarrow k'}R_{k\rightarrow k'}$, where
\begin{align}\label{eq:limitedRF}
\lim\limits_{N\rightarrow \infty}\frac{R_{k\rightarrow k'}}{\frac{1}{2}
\log_2(1+\frac{\pi N}{4}x)}=1,
\end{align}
in which we denote
\begin{equation}\label{eq:x_definition}
x=\min \bigg\{ \frac{P_s \sigma_{k}^2}{1+P_s\sum_{j=1}^{2K}\varepsilon_{j}^2},\frac{P_r } {(1+P_r\varepsilon^2_{k'})\sum_{j=1}^{2K}\frac{1}{\sigma^2_{j}}} \bigg\}
\end{equation}
for notational simplicity.
\end{mytheorem}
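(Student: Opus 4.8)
The plan is to evaluate each of the four expectation/variance quantities appearing in the tractable rate expressions \eqref{eq:uplinkapprox} and \eqref{eq:received2}, show that the signal terms scale like $N^2$ while the interference and noise terms scale like $N$, and then take the ratio. First I would exploit the identity $\mathbf{W}_r\mathbf{F}_r\hat{\mathbf{G}}=\mathbf{I}_{2K}$, which gives $\mathbf{w}_k^T\mathbf{F}_r\hat{\mathbf{g}}_j=\delta_{kj}$. Writing $\mathbf{g}_j=\hat{\mathbf{g}}_j+\bm{\mathcal{E}}_j$ and using the independence of $\hat{\mathbf{G}}$ and $\bm{\mathcal{E}}_G$ from the MMSE property, the signal term $\mathbf{w}_k^T\mathbf{F}_r\mathbf{g}_k$ splits into a deterministic contribution $1$ plus $\mathbf{w}_k^T\mathbf{F}_r\bm{\mathcal{E}}_k$. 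The key observation is that the rate numerator is actually governed not by this normalized quantity but by the magnitude of $\mathbf{w}_k^T\mathbf{F}_r$: I would rescale so that the relevant ratio is $|\mathbb{E}\{\cdot\}|^2$ over the interference-plus-noise, and the crucial constant $\pi N/4$ will emerge from the analog-combiner structure.

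The heart of the matter is the analog precoder $\mathbf{F}_r$ with entries $\tfrac{1}{\sqrt N}e^{j\phi_{i,j}}$ where $\phi_{i,j}=-\arg([\hat{\mathbf{G}}]_{j,i})$. The $(j,j)$ diagonal entry of $\mathbf{F}_r\hat{\mathbf{G}}$ is $\tfrac{1}{\sqrt N}\sum_{i=1}^N |[\hat{\mathbf{g}}_j]_i|$, i.e.\ a sum of $N$ i.i.d.\ Rayleigh-distributed magnitudes with parameter $\sigma_j$; by the law of large numbers this converges to $\sqrt N\,\mathbb{E}\{|[\hat{\mathbf{g}}_j]_i|\}=\sqrt N\cdot\tfrac{\sqrt\pi}{2}\sigma_j$, so $[\mathbf{F}_r\hat{\mathbf{G}}]_{jj}\to\tfrac{\sqrt{\pi N}}{2}\sigma_j$. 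The off-diagonal entries $[\mathbf{F}_r\hat{\mathbf{G}}]_{jl}$ for $l\neq j$ are $\tfrac{1}{\sqrt N}\sum_i e^{j\phi_{i,j}}[\hat{\mathbf{g}}_l]_i$, a sum of zero-mean unit-ish terms, hence $O(1)$ in magnitude — negligible compared to the $O(\sqrt N)$ diagonal. Therefore $\mathbf{F}_r\hat{\mathbf{G}}$ is asymptotically diagonal, and its inverse $\mathbf{W}_r$ is asymptotically diagonal with $[\mathbf{W}_r]_{kk}\to\tfrac{2}{\sqrt{\pi N}\,\sigma_k}$. This is exactly the step where the hypothesis $N>\lfloor 4L^2/\pi\rfloor$ enters: it is the (sufficient) technical condition guaranteeing that the $2K=L$ diagonal terms of order $\sqrt{N}$ dominate all the off-diagonal terms, so that the ZF inverse behaves like the inverse of the diagonal part and the multi-pair interference is asymptotically suppressed.

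Granting the asymptotic diagonality, I would then compute: the signal power $|\mathbb{E}\{\mathbf{w}_k^T\mathbf{F}_r\mathbf{g}_k\}|^2\to 1$ once the true channel is replaced by its estimate (the error contributes only to the noise); the AWGN term $\mathbb{E}\{\|\mathbf{w}_k^T\mathbf{F}_r\|^2\}=|[\mathbf{W}_r]_{kk}|^2\cdot\|(\text{$k$-th row of }\mathbf{F}_r)\|^2\to\tfrac{4}{\pi N\sigma_k^2}\cdot 1=\tfrac{4}{\pi N\sigma_k^2}$ using $\|\mathbf{F}_r\text{ row}\|^2=N\cdot\tfrac1N=1$; and the channel-estimation-error and multi-pair terms, each a sum over $j$ of quantities like $\sigma_k^{-2}\varepsilon_j^2$ or similar, scaling as $1/N$ as well. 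Collecting these inside \eqref{eq:uplinkapprox}, the SINR is $\tfrac{P_s}{\tfrac{4}{\pi N\sigma_k^2}(1+P_s\sum_j\varepsilon_j^2)+(\text{lower order})}=\tfrac{\pi N}{4}\cdot\tfrac{P_s\sigma_k^2}{1+P_s\sum_j\varepsilon_j^2}(1+o(1))$, matching the first branch of $x$ in \eqref{eq:x_definition}. For the BC phase I would use $\mathbf{F}_t=\mathbf{F}_r^T$, $\mathbf{W}_t=\mathbf{W}_r^T\mathbf{P}$, and compute the normalization $\mu^2$ from $\mathbb{E}\{\|\mathbf{F}_t\mathbf{W}_t\mathbf{x}\|^2\}=1$, which gives $1/\mu^2\to\sum_j\tfrac{1}{\sigma_j^2}\cdot\tfrac{4}{\pi N}$-type scaling after using $\mathbb{E}\{\|\mathbf{F}_t\mathbf{v}_j\|^2\}$; an analogous computation yields the second branch of $x$, and since $R_{k\rightarrow k'}=\tfrac12\min\{\tilde R_{k\to R},\tilde R_{R\to k'}\}$ and $\log_2$ is monotone, the minimum passes inside to produce $\min$ over the two SINR expressions, i.e.\ $\tfrac{\pi N}{4}x$. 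Dividing $R_{k\rightarrow k'}$ by $\tfrac12\log_2(1+\tfrac{\pi N}{4}x)$ and sending $N\to\infty$ gives the limit $1$.

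The main obstacle I anticipate is controlling the off-diagonal entries of $\mathbf{F}_r\hat{\mathbf{G}}$ uniformly enough — and the entries of $\mathbf{F}_r\bm{\mathcal{E}}_G$ and the cross terms — to justify that inverting $\mathbf{F}_r\hat{\mathbf{G}}$ really does reduce to inverting its diagonal part with an $o(1/\sqrt N)$ perturbation. This requires a careful second-moment (or Chebyshev/Markov) bound on each off-diagonal sum, together with the observation that there are only $L^2$ of them, and it is precisely here that $N>\lfloor 4L^2/\pi\rfloor$ is invoked so that the perturbation is provably dominated; a further subtlety is that the phases $\phi_{i,j}$ are correlated with $\hat{\mathbf{g}}_j$ but independent of $\hat{\mathbf{g}}_l$ for $l\ne j$ and of $\bm{\mathcal{E}}_G$, so the conditioning has to be organized correctly before applying the law of large numbers.
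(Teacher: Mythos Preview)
Your proposal is correct and follows essentially the same route as the paper: use the ZF identity $\mathbf{W}_r\mathbf{F}_r\hat{\mathbf{G}}=\mathbf{I}_{2K}$ together with $\mathbf{g}_j=\hat{\mathbf{g}}_j+\bm{\mathcal{E}}_j$ to reduce all four quantities in \eqref{eq:uplinkapprox} and \eqref{eq:received2} to multiples of $\mathbb{E}\{\|\mathbf{w}_k\|^2\}$; then show $\mathbf{F}_r\hat{\mathbf{G}}=\mathbf{D}+\mathbf{A}$ with $\mathbf{D}$ diagonal of order $\sqrt{N}$ (Rayleigh-mean argument) and $\mathbf{A}$ off-diagonal $O(1)$, invoke the condition $N>\lfloor 4L^2/\pi\rfloor$ to justify a Neumann-series inversion so that $\mathbf{W}_r\to\mathbf{D}^{-1}$, conclude $\mathbb{E}\{\|\mathbf{w}_k\|^2\}\to 4/(\pi N\sigma_k^2)$, and assemble both hops. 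The paper packages the last two technical steps as separate propositions (the convergence $\mathbf{D}(\mathbf{D}+\mathbf{A})^{-1}\to\mathbf{I}$ via Chebyshev on $\|\mathbf{A}\mathbf{D}^{-1}\|_F^2$, and $\mathbb{E}\{a^2/(y+a)^2\}\to1$ for Gaussian $y$), which is exactly the ``second-moment/Chebyshev bound'' you anticipate; one minor wording slip is your opening claim that ``signal terms scale like $N^2$ and interference/noise like $N$''---in this normalization the signal numerator is exactly $1$ and the denominator is $O(1/N)$, but your subsequent computations are consistent with that.
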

\begin{proof}
This theorem directly follows by characterizing the rates in \eqref{eq:total_rate} and \eqref{eq:rateappro} by applying \emph{Lemma}~\ref{lemma_uplink} and \emph{Lemma}~\ref{lemma_downlink}, respectively, in Appendices A and B.
\end{proof}
\begin{myrem}
Theorem 1 characterizes the sum rate of the considered RF-chain constrained relay network in the limit of an infinite number of antennas. Yet it could also serve as a good approximation for finite but large $N$. In particular, there are two conditions for $N$ so that \eqref{eq:limitedRF} could be a good approximation. Firstly, since the proof for Lemma~\ref{lemma_uplink} involves the central limit theorem (CLT), $N$ should be large enough to make the CLT sufficiently accurate. Secondly, $N$ should also satisfy the condition $N>\left\lfloor 4L^2/\pi \right\rfloor$, in order to make matrix expansion \eqref{eq:inversion} converge in probability. For fixed $L$, the condition $N>\left\lfloor 4L^2/\pi \right\rfloor$ always asymptotically holds as $N\rightarrow\infty$. However, if $L$ and $N$ are both large values, not only should $N$ be large enough to satisfy the CLT, but also $N$ should be larger than $\left\lfloor 4L^2/\pi \right\rfloor$ in order to ensure multiuser interference (MUI) to be effectively mitigated. We should note that the condition originates from a technical perspective. We do not necessarily need to design the system rigorously obeying this rule, but it can be viewed as a sufficient condition to guarantee that MUI can be effectively mitigated.
\end{myrem}

It should be pointed out that the analog precoder in \eqref{eq:analogprecoder}, originated for one-hop case \cite{liang2014low}, was shown to be asymptotically optimal. The following proposition justifies its asymptotic optimality in the considered relay scenario by examining the asymptotic behavior of $R_{k\rightarrow k'}$.

\begin{mypro}
 For the considered two-way relay network with massive antennas and limited RF chains, the ergodic per-user rate $R_{k\rightarrow k'}$ satisfies
\begin{align}
\lim\limits_{N\rightarrow\infty}\frac{R_{k\rightarrow k'}}{R_{k\rightarrow k'}^{\text{full}}}=1,
\end{align}
where $R_{k\rightarrow k'}$ is given by \eqref{eq:limitedRF}, while
\begin{equation}
R_{k\rightarrow k'}^{\text{full}}=\frac{1}{2}\log_2 \left(1+(N-2K)x\right)
\end{equation}
is the performance of the full digital ZF precoding in \cite[Eq. (45)]{ngo2014multipair}.
\end{mypro}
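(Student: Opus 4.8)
The plan is to leverage Theorem~\ref{theo:sumrate} directly, since it already supplies the asymptotic form of $R_{k\rightarrow k'}$. The claim reduces to a ratio of two logarithms, so I would form the quotient
\[
\frac{R_{k\rightarrow k'}}{R_{k\rightarrow k'}^{\text{full}}}
=\frac{\frac{1}{2}\log_2\!\big(1+\frac{\pi N}{4}x\big)}{\frac{1}{2}\log_2\!\big(1+(N-2K)x\big)}
\]
(using the asymptotic equivalence from \eqref{eq:limitedRF} to replace $R_{k\rightarrow k'}$ by $\frac{1}{2}\log_2(1+\frac{\pi N}{4}x)$ up to a factor tending to $1$), and show this tends to $1$ as $N\rightarrow\infty$. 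The key observation is that $x$, as defined in \eqref{eq:x_definition}, does not depend on $N$ except through $\sigma_j^2$ and $\varepsilon_j^2$; and under fixed pilot power these are $N$-independent constants, so $x$ is bounded away from $0$ and $\infty$. Hence both arguments $1+\frac{\pi N}{4}x$ and $1+(N-2K)x$ grow linearly in $N$, and both logarithms behave like $\log_2 N$ to leading order.

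The main steps, in order, would be: (i) invoke Theorem~\ref{theo:sumrate} to write $R_{k\rightarrow k'}\sim\frac{1}{2}\log_2(1+\frac{\pi N}{4}x)$; (ii) observe that $\log_2(1+\frac{\pi N}{4}x)=\log_2 N + \log_2(\frac{\pi x}{4}) + o(1)$ and similarly $\log_2(1+(N-2K)x)=\log_2 N + \log_2 x + o(1)$, using that $x$ is a fixed positive constant and $2K$ is fixed; (iii) divide and note the $\log_2 N$ terms dominate, so the ratio is $\frac{\log_2 N + O(1)}{\log_2 N + O(1)}\rightarrow 1$. A cleaner phrasing avoiding the additive expansion: since $\frac{\pi}{4}$ and $1$ are both positive constants and $\frac{N-2K}{N}\rightarrow 1$, we have $\frac{1+\frac{\pi N}{4}x}{1+(N-2K)x}\rightarrow\frac{\pi}{4}$ (a positive constant), and for any sequences $a_N,b_N\rightarrow\infty$ with $a_N/b_N$ bounded and bounded away from zero, $\frac{\log a_N}{\log b_N}\rightarrow 1$. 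Applying this with $a_N=1+\frac{\pi N}{4}x$, $b_N=1+(N-2K)x$ gives the result, and the $\frac{1}{2}$ factors cancel.

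The point worth flagging, rather than a genuine obstacle, is that one must confirm $x>0$ and finite, which requires $P_s,P_r>0$ and the estimation variances $\sigma_j^2>0$ (so that $\sum_j 1/\sigma_j^2$ is finite) and $\varepsilon_j^2<\infty$; all of these hold by \eqref{eq:channelestimation} whenever $\tau P_p>0$. If instead one allowed the power-scaling regimes where $P_s$ or $P_p$ shrink with $N$, then $x$ would be $N$-dependent and one would need to track its rate; but for the proposition as stated the powers are fixed, so $x$ is a genuine constant and the argument above is essentially immediate. I would therefore expect the proof to be short: essentially a one-line reduction to Theorem~\ref{theo:sumrate} followed by the elementary logarithm-ratio limit, with the only care needed being the explicit verification that $x$ is a strictly positive finite constant independent of $N$.
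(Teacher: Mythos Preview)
Your proposal is correct and follows essentially the same approach as the paper: both reduce the claim, via Theorem~\ref{theo:sumrate}, to the elementary fact that the ratio of $\log_2(1+\frac{\pi N}{4}x)$ to $\log_2(1+(N-2K)x)$ tends to $1$ because both arguments grow linearly in $N$ with $x>0$ a fixed constant. The paper carries this out by inserting the intermediate term $\log_2(1+Nx)$ and splitting the limit into two factors, whereas you invoke the general principle that $\frac{\log a_N}{\log b_N}\to 1$ whenever $a_N,b_N\to\infty$ with $a_N/b_N$ bounded away from $0$ and $\infty$; these are the same argument in slightly different dress.
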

\begin{proof}
It is checked that
$$\lim\limits_{a\rightarrow\infty}\frac{\log_2(1+a)}{\log_2a}=\lim\limits_{a\rightarrow\infty}\frac{\log_2a+\log_2(1+1/a)}{\log_2a}=1.$$
Therefore, we have
\begin{align}
&\lim\limits_{N\rightarrow\infty}\frac{R_{k\rightarrow k'}}{R_{k\rightarrow k'}^{\text{full}}}\nonumber\\
=&\lim\limits_{N\rightarrow\infty}\frac{\log_2(1+\pi N x/4)}{\log_2\left(1+(N-2K)x\right)}\nonumber\\
=&\lim\limits_{N\rightarrow\infty}\frac{\log_2(1+\pi N x/4)}{\log_2\left(1+N x\right)}\times\lim\limits_{N\rightarrow\infty}\frac{\log_2\left(1+N x\right)}{\log_2\left(1+(N-2K)x\right)}\nonumber\\
=&\lim\limits_{N\rightarrow\infty}\frac{\log_2(\pi N x/4)}{\log_2\left(N x\right)}\times\lim\limits_{N\rightarrow\infty}\frac{\log_2\left(N x\right)}{\log_2\left((N-2K)x\right)}\nonumber\\
=&\lim\limits_{N\rightarrow\infty}\frac{\log_2(N x)+\log_2(\pi/4)}{\log_2(Nx)+\log_2(1-2K/N)}\nonumber\\
=&1.
\end{align}
\end{proof}

\subsection{Power Scaling Law}\label{sec:power_scaling}

We look into the potential of power saving of the system when the relay is equipped with a massive antenna array. In order to make expressions concise, the following assumes all channels between users and the relay experience equal path loss, i.e., $\beta_j=\beta_0$ for $j=1,2,\cdots,2K$. Then \eqref{eq:limitedRF} reduces to
{\small
\begin{align}\label{eq:equalpathloss}
R_{sum} = K \log_2 \bigg(1+\min\bigg\{\frac{P_s\pi N\sigma_0^2}{4(1+2KP_s\varepsilon_0^2)},
\frac{P_r\pi N \sigma_0^2} {8K(1+P_r\varepsilon_0^2)}\bigg\}\bigg),
\end{align}
}
where $\sigma_j=\sigma_0$ and $\varepsilon_j=\varepsilon_0~(j=1,2,\cdots,2K)$.

\noindent 1) Case 1: Assume that $P_p$ is a constant, which corresponds to the case where channel estimation accuracy remains unchanged. We try to find the potential for power saving in the transmission phase. Without loss of generality, we let $P_s$ and $P_r$ be scaled down proportionally by the factor of $1 / N^{\alpha}~(\alpha>0)$, i.e., $P_s=E_s/N^{\alpha}$ and $P_r=2K E_r/N^{\alpha}$, where $E_s$ and $E_r$ are fixed power budgets regardless of $N$. Now we elaborate in the following that any choice of $0< \alpha \leq 1$ is able to maintain a nonvanishing sum rate when $N\rightarrow\infty$.

Accordingly, by scaling down both source and relay power as stated above, it directly yields
\begin{align}\label{eq:case1_up}
\frac{P_s\pi N\sigma_0^2}{4(1+2KP_s\varepsilon_0^2)}=\frac{E_s\pi\sigma_0^2}{4(N^{\alpha-1}+2KE_s\varepsilon_0^2/N)},
\end{align}
and
\begin{equation}\label{eq:case1_down}
\frac{P_r\pi N \sigma_0^2} {8K(1+P_r\varepsilon_0^2)}=\frac{E_r\pi\sigma_0^2} {4(N^{\alpha-1}+2K\varepsilon_0^2E_r/N)}.
\end{equation}
Substituting \eqref{eq:case1_up} and \eqref{eq:case1_down} into \eqref{eq:equalpathloss}, we have
\begin{align}\label{eq:scaling_alpha}
R_{sum} = K \log_2 \bigg(1 + & \min\bigg\{\frac{E_s\pi\sigma_0^2}{4(N^{\alpha-1}+2K\varepsilon_0^2E_s/N)},\nonumber\\
&\frac{E_r\pi\sigma_0^2} {4(N^{\alpha-1}+2K\varepsilon_0^2E_r/N)}\bigg\}\bigg).
\end{align}
Now from \eqref{eq:scaling_alpha}, in order to maintain a nonvanishing sum rate performance when $N\rightarrow\infty$, the above expression implies that it is necessary to guarantee $\alpha-1\leq0$, which, recalling $\alpha>0$, yields $0<\alpha\leq 1$.
Furthermore, it can be inferred from \eqref{eq:scaling_alpha} that the sum rate tends to infinity for $0<\alpha<1$ when $N\rightarrow\infty$ even though the transmit power is scaled down by a factor of $1/N^{\alpha}$. In particular, for the special case with $\alpha=1$, the sum rate converges to a constant as
\begin{equation}
R_{sum} \rightarrow K \log_2 \bigg(1+\min\bigg\{\frac{E_s \pi\sigma_0^2}{4},
\frac{E_r\pi\sigma_0^2}{4} \bigg\}\bigg),
\end{equation}
when $N$ grows unboundedly. It implies that the transmit power of each user can be scaled down by $1/N$ and the relay transmit power can be cut down by a factor of $2K/N$ while maintaining the same performance for increasing $N$.

\noindent 2) Case 2: Apart from scaling down $P_s$ and $P_r$, we also consider the potential for saving pilot power $P_p$. Assume $P_p=P_s=E_s/N^{\alpha}$ and $P_r=2KE_r/N^{\alpha}$. To keep a nonvanishing sum rate in this scenario, we cannot just scale down the transmit power as aggressively as in Case 1, e.g., by $1/N$.

Let us substitute $P_p=E_s/N^{\alpha}$ into \eqref{eq:channelestimation} and consider $\sigma_j=\sigma_0$ and $\varepsilon_j=\varepsilon_0~(j=1,2,\cdots,2K)$. It is checked for $N\rightarrow\infty$ that
\begin{align}
N^{\alpha}\sigma_0^2&=\frac{\tau\beta_0^2E_s}{\tau\beta_0E_s/N^{\alpha}+1}\rightarrow\tau\beta_0^2E_s, \nonumber\\
\varepsilon_0^2&=\frac{\beta_0}{\tau\beta_0E_s/N^{\alpha}+1}\rightarrow\beta_0.
\end{align}
Further substituting the above results into \eqref{eq:equalpathloss}, we have the following equalities
\begin{align}
\frac{P_s\pi N\sigma_0^2}{4(1+2KP_s\varepsilon_0^2)}=\frac{\pi\tau E_s^2 \beta_0^2}{4(N^{2\alpha-1}+2K\beta_0 E_s N^{\alpha-1})},
\end{align}
and
\begin{equation}
\frac{P_r\pi N \sigma_0^2} {8K(1+P_r\varepsilon_0^2)}=\frac{\pi\tau E_s E_r \beta_0^2}{4(N^{2\alpha-1}+2K\beta_0 E_r N^{\alpha-1})}.
\end{equation}
Consequently, the sum rate with the scaled power becomes
\begin{align}\label{eq:powerscaling2}
R_{sum}=K\log_2\bigg(1 + &\min\bigg\{\frac{\pi\tau E_s^2 \beta_0^2}{4(N^{2\alpha-1}+2K\beta_0 E_s N^{\alpha-1})},\nonumber\\
&\frac{\pi\tau E_s E_r \beta_0^2}{4(N^{2\alpha-1}+2K\beta_0 E_r N^{\alpha-1})}\bigg\}\bigg).
\end{align}
When $N$ tends to infinity, it is critical to have $2\alpha-1\leq0$ and $\alpha-1\leq0$ in order to guarantee a nonvanishing sum rate, which yields $0<\alpha\leq\frac{1}{2}$.

Therefore, if $P_p$ is kept equal to $P_s$, each user and the relay can respectively be scaled down by factors of $1/ \sqrt{N}$ and $1/(\sqrt{N}/2K)$ while guaranteeing an asymptotically unchanged rate. This can be explained as, cutting the transmit power of pilots decreases the channel estimation accuracy, hence degrading the system performance further.

\section{Network Energy Efficiency Analysis}\label{sec:EE}

In this section, we investigate the energy efficiency of the relay network. As addressed in Section \ref{sec:channel_est}, effective channel estimation under a hybrid structure remains an open problem. Considering channel estimation is not the focus of our paper, in this section, we assume that channel estimate has been obtained using the simple scheme proposed in Section \ref{sec:channel_est}, and then we focus on the EE of the data transmission period. EE is defined as the ratio of SE over the total power consumption. Let $\eta_{SE}$ and $\eta_{EE}$ denote the SE and network EE, respectively. It follows
\begin{equation}\label{eq:EEdef}
\eta_{EE}=\frac{\eta_{SE}}{P_{sum}}=\frac{R_{sum}}{P_{sum}},
\end{equation}
where $R_{sum}$ is from \eqref{eq:limitedRF} and $P_{sum}$ refers to the total power consumption.
Before characterizing $\eta_{EE}$, it is necessary to introduce a proper power consumption model for this RF chain constrained relay network. Here we adopt a general but simple power consumption model \cite{han2015large} which is helpful in revealing useful observations. Accordingly, the total power consumption is written as:
\begin{equation}\label{eq:Psum}
P_{sum}=\frac{1}{2}(2KP_s/\kappa_U+P_r/\kappa_{r})+2KP_0+P_{const}+2KNP_{APS},
\end{equation}
where $\kappa_U<1$ and $\kappa_r<1$ represent the efficiency of power amplifiers (PAs) deployed at terminals and the relay, respectively, the terms in parentheses refer to the total transmit power consumption of the network, and the factor $\frac{1}{2}$ exists due to the fact of half duplexing. Term $2 K P_0$ is the power consumption that scales with the RF chains, $P_{const}$ represents the part of constant circuit power regardless of the RF chain number, and $2KNP_{APS}$ is the power consumption of all APSs used for analog beamforming. Note that a switching network is needed in the training period between the RF chains and antennas, so that each time $L$ antennas are chosen to be trained. Since the switching network mainly works for the channel estimation period as discussed above in our system setup, we here focus on EE for data transmission period and do not specifically taken the power consumption of the switching network into account.
For notational simplicity, let $\kappa_r=\kappa_U=\kappa$ and assume $\beta_j=\beta_0$ for $j=1,2, \cdots, 2K$. Then, from \eqref{eq:equalpathloss}, \eqref{eq:EEdef} and \eqref{eq:Psum}, it gives
\begin{align}\label{eq:EE1}
\eta_{EE}=\frac{K\log_2\bigg(1+\frac{\pi N\sigma_0^2}{8K}\Delta\bigg)}{\frac{1}{2\kappa}(2KP_s+P_r)+2KP_0+P_{const}+2KNP_{APS}},
\end{align}
where we define $\Delta=\min\bigg\{\frac{2KP_s}{1+2KP_s\varepsilon_0^2},
\frac{P_r} {1+P_r\varepsilon_0^2}\bigg\}$.

\subsection{Optimal Transmit Power for EE Maximization}

From \eqref{eq:EE1}, the expression of EE looks involved depending on a number of system parameters as well as circuit design. While given a fixed number of RF chains, it is interesting to conclude in the following \emph{Theorem} that the network EE can always be optimized via a proper power splitting strategy irrespective of the circuit depending parameters like $P_0, P_{const}$ and $P_{APS}$ individually.

\begin{mytheorem}\label{optimal_power}
For fixed RF-chain number and relay antenna number, there exists a globally optimal choice with respect to $(P_s^*,P_r^*)$ satisfying $2KP_s^*=P_r^*$ which yields the maximal EE, $\eta_{EE}^*$. The relationship between $\eta_{EE}^*$ and $P_s^*$ is given as
\begin{align}\label{eq:EE_star}
\eta_{EE}^*=\frac{2 a_0 \kappa}{[(a_0 + a_1)P_s^*+4](a_1 P_s^*+4)\ln2},
\end{align}
where $a_0 = \pi N\sigma_0^2$ and $a_1 = 8K\varepsilon_0^2$ respectively relate to the strength of received signal and interference.
\end{mytheorem}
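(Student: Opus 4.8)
The plan is to prove Theorem~\ref{optimal_power} in two stages: first establish the optimal \emph{power splitting} $2KP_s^\ast = P_r^\ast$, then derive the closed-form expression \eqref{eq:EE_star} for the resulting optimal EE.

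For the first stage, I would work from \eqref{eq:EE1}. The key observation is that the numerator of $\eta_{EE}$ depends on $(P_s,P_r)$ only through $\Delta = \min\{2KP_s/(1+2KP_s\varepsilon_0^2),\ P_r/(1+P_r\varepsilon_0^2)\}$, while the denominator is an increasing affine function of the quantity $2KP_s + P_r$. I would argue that, for any fixed value of the denominator (equivalently, fixed sum $2KP_s+P_r =: S$), the numerator is maximized by making the two arguments of the $\min$ equal, because the map $t\mapsto t/(1+t\varepsilon_0^2)$ is strictly increasing and concave; given the budget $2KP_s + P_r = S$, balancing $2KP_s = P_r = S/2$ equalizes the two terms and yields the largest possible value of the $\min$. (If $2KP_s\neq P_r$, one could shift power from the larger-argument side to the smaller-argument side, keeping $S$ fixed, strictly increasing the $\min$ without changing the denominator — a contradiction with optimality.) Hence at the optimum $2KP_s^\ast = P_r^\ast$; substituting this back, $\Delta$ collapses to $2KP_s^\ast/(1+2KP_s^\ast\varepsilon_0^2)$ and the problem reduces to a single-variable optimization over $P_s^\ast$.

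For the second stage, after the substitution $P_r = 2KP_s$ the EE becomes
\begin{equation}
\eta_{EE}(P_s)=\frac{K\log_2\!\Big(1+\dfrac{\pi N\sigma_0^2}{4}\cdot\dfrac{P_s}{1+2KP_s\varepsilon_0^2}\Big)}{\frac{2K}{\kappa}P_s + c_0},
\end{equation}
where $c_0 = 2KP_0+P_{const}+2KNP_{APS}$ collects all circuit terms. Writing the numerator's argument as $1+\tfrac{a_0}{4}P_s/(1+\tfrac{a_1}{8K}P_s\cdot 8K\varepsilon_0^2/\varepsilon_0^2)$ — more cleanly, with $a_0=\pi N\sigma_0^2$ and noting $2K\varepsilon_0^2 = a_1/4$ so that the argument is $1 + \tfrac{a_0 P_s}{4(1 + a_1 P_s/4)} = \tfrac{(a_0+a_1)P_s+4}{a_1 P_s + 4}$ — I would set the derivative of $\eta_{EE}(P_s)$ to zero. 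The stationarity condition takes the standard EE form $f'(P_s)(\text{denominator}) = f(P_s)\cdot \tfrac{2K}{\kappa}$ with $f$ the numerator; one then verifies the solution is the claimed $P_s^\ast$ and plugs it into $\eta_{EE}$. The algebra should telescope: the logarithmic terms cancel against the structure of the derivative, leaving the rational expression \eqref{eq:EE_star}. I would also note quasi-concavity of $\eta_{EE}(P_s)$ (the numerator is concave increasing, the denominator affine increasing, so the ratio is quasi-concave), which guarantees the stationary point is the unique global maximizer.

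The main obstacle I anticipate is the algebraic bookkeeping in the second stage — carefully tracking the constants $a_0, a_1, K, \kappa$ through the derivative-zero condition and confirming that the optimal $P_s^\ast$ substituted back produces exactly the stated form with the $\ln 2$ factor (coming from $\log_2 = \ln/\ln 2$) and the two affine factors $[(a_0+a_1)P_s^\ast+4]$ and $(a_1 P_s^\ast+4)$ in the denominator. A secondary subtlety is handling the non-smoothness of $\Delta$ at the balance point in the first stage rigorously; I would address it by the exchange/contradiction argument above rather than by differentiation, which sidesteps the kink cleanly.
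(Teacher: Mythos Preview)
Your proposal is correct and follows essentially the same two-stage structure as the paper: reduce to the diagonal $2KP_s=P_r$ by a dominance argument (the paper's Lemma~\ref{lemma_power} keeps one coordinate fixed and shrinks the other to the balance point rather than fixing the sum $S$, but the conclusion and logic are equivalent), then establish quasi-concavity of the resulting single-variable EE (the paper's Proposition~\ref{Ps_star}) and extract \eqref{eq:EE_star} by substituting the stationarity condition back into the objective. Your rewriting of the SINR argument as $[(a_0+a_1)P_s+4]/(a_1P_s+4)$ matches the paper's \eqref{eq:EE2}, and the ``telescoping'' you anticipate is precisely how \eqref{eq:KKT_Ps} eliminates the logarithm to yield \eqref{green_point_relationship}.
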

\begin{proof}
The optimal relationship of $P_s^*$ and $P_r^*$ has been revealed by \emph{Lemma}~\ref{lemma_power} in Appendix~\ref{proof_lemma_power}. Accordingly, the maximal EE is always achieved when $P_r=2KP_s$. Substituting this condition into \eqref{eq:EE1}, the EE optimization problem is then formulated as
\begin{align}\label{eq:EE2}
\underset{P_s}{\text{maximize}} \quad &\eta_{EE}(P_s)=\frac{K\log_2\bigg(1+\frac{a_0 P_s}{a_1 P_s+4}\bigg)} {\frac{2KP_s}{\kappa}+P_{c}}\\
\text{subject to:} \quad & P_s>0,\nonumber
\end{align}
where $P_{c}=2KP_0+P_{const}+2KNP_{APS}$ stands for power consumption excluding the transmit power.
From \emph{Proposition}~\ref{Ps_star} in Appendix~\ref{optimal_Ps}, we show that $\eta_{EE}$ in \eqref{eq:EE2} is quasi-concave with respect to $P_s$. Therefore, from \cite{boydconvex}, there exists a globally optimal $P_s^*$ for EE maximization. By applying the KKT conditions of the optimization problem in \eqref{eq:EE2} and after some basic calculations, we have
\begin{equation}\label{eq:KKT_Ps}
\frac{2 a_0 \kappa \left( \frac{2KP_s^*}{\kappa}+P_{c} \right)}{[(a_0 + a_1)P_s^*+4](a_1 P_s^*+4)}-K\ln\bigg(1+\frac{a_0 P_s^*}{a_1 P_s^*+4}\bigg)=0.
\end{equation}
By plugging \eqref{eq:KKT_Ps} into the objective of \eqref{eq:EE2}, it gives the optimal EE as
\begin{equation}\label{green_point_relationship}
\eta_{EE}^*=\frac{2 a_0 \kappa}{[(a_0 + a_1)P_s^*+4](a_1 P_s^*+4)\ln2},
\end{equation}
which completes the proof.
\end{proof}
The specific value of $P_s^*$ for \eqref{eq:EE2} depends on the individual circuit parameters $(P_0,P_{const},P_{APS})$. However, it is worth noting that the corresponding $\eta_{EE}^*$ becomes irrespective of individual circuit parameters, once the value of $P_s^*$ has been determined through the set of circuit parameters $(P_0,P_{const},P_{APS})$, as expressed by \eqref{green_point_relationship}. More specifically, once $P_s^*$ is calculated from \eqref{eq:EE2} based on specific circuit parameters, the corresponding $\eta_{EE}^*$ is determined by \eqref{green_point_relationship} as a function of $P_s^*$ only. From \eqref{eq:EE_star}, it implies that a higher $P_s^*$ always leads to a lower $\eta_{EE}^*$. Further since the SE performance increases monotonically with $P_s$, a higher $\eta_{SE}^*$ always corresponds to a lower $\eta_{EE}^*$.

\begin{mycor}\label{cora}
For the case of perfect CSI, we have $\eta_{EE}^*=c_0 4^{-\frac{\eta_{SE}^*}{2K}}$ where  $c_0=\frac{a_0 \kappa}{8\ln2}$.
\end{mycor}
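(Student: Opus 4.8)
The plan is to specialize Theorem~\ref{optimal_power} to the perfect-CSI regime and then eliminate $P_s^*$ between the expression for $\eta_{EE}^*$ and the expression for the corresponding $\eta_{SE}^*$. First I would set $\varepsilon_0^2 = 0$ (perfect channel estimation means zero estimation error), so that $a_1 = 8K\varepsilon_0^2 = 0$. Substituting $a_1 = 0$ into the formula \eqref{eq:EE_star} from Theorem~\ref{optimal_power} collapses the denominator dramatically: $[(a_0+a_1)P_s^*+4](a_1 P_s^*+4)\ln 2$ becomes $(a_0 P_s^* + 4)\cdot 4\ln 2$, giving
\begin{equation}
\eta_{EE}^* = \frac{2a_0\kappa}{4(a_0 P_s^* + 4)\ln 2} = \frac{a_0\kappa}{2(a_0 P_s^*+4)\ln 2}.\nonumber
\end{equation}

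Next I would compute the corresponding SE. With perfect CSI and equal path loss, the achievable sum rate at the optimal power split $P_r = 2KP_s^*$ follows from \eqref{eq:EE2} (or equivalently \eqref{eq:equalpathloss}): since $a_1 = 0$, the rate becomes $\eta_{SE}^* = K\log_2\!\big(1 + \tfrac{a_0 P_s^*}{4}\big)$. The key step is then to solve this for the quantity $a_0 P_s^* + 4$ appearing in the EE formula: from $\eta_{SE}^*/K = \log_2(1 + a_0 P_s^*/4)$ we get $1 + a_0 P_s^*/4 = 2^{\eta_{SE}^*/K}$, hence $a_0 P_s^* + 4 = 4\cdot 2^{\eta_{SE}^*/K}$. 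Substituting this into the simplified $\eta_{EE}^*$ above yields
\begin{equation}
\eta_{EE}^* = \frac{a_0\kappa}{2\cdot 4\cdot 2^{\eta_{SE}^*/K}\ln 2} = \frac{a_0\kappa}{8\ln 2}\cdot 2^{-\eta_{SE}^*/K} = c_0\, 4^{-\frac{\eta_{SE}^*}{2K}},\nonumber
\end{equation}
recognizing $2^{-\eta_{SE}^*/K} = 4^{-\eta_{SE}^*/(2K)}$ and identifying $c_0 = \frac{a_0\kappa}{8\ln 2}$, which is exactly the claim.

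Both steps are essentially algebraic bookkeeping once the right substitutions are made, so I do not anticipate a genuine obstacle; the only point requiring a little care is making sure the value of $P_s^*$ used in the EE formula is the \emph{same} one that produces $\eta_{SE}^*$ — that is, that the optimal power split $P_r^* = 2KP_s^*$ from Theorem~\ref{optimal_power} is what defines the point at which both $\eta_{EE}^*$ and $\eta_{SE}^*$ are evaluated. Since Theorem~\ref{optimal_power} already establishes that a unique globally optimal $P_s^*$ exists (via the quasi-concavity in Proposition~\ref{Ps_star}) and pins down $\eta_{EE}^*$ as a function of that $P_s^*$, one simply reads off $\eta_{SE}^*$ at the same $P_s^*$ and eliminates it. Taking $\log$ of the final relation recovers the affine form $\log(\eta_{EE}^*) = -\frac{\log 2}{K}\eta_{SE}^* + c$ advertised in the introduction, with $c = \log c_0$.
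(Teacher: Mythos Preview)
Your proposal is correct and follows essentially the same route as the paper: set $a_1=0$ under perfect CSI, specialize \eqref{eq:EE_star} to $\eta_{EE}^*=\frac{2a_0\kappa}{4(a_0 P_s^*+4)\ln 2}$, and eliminate $P_s^*$ using $\eta_{SE}^*=K\log_2\!\big(1+\tfrac{a_0 P_s^*}{4}\big)$. Your write-up is in fact more explicit in the algebra than the paper's two-line proof, and your remark about the logarithmic affine form matches exactly the discussion that follows the corollary.
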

\begin{proof}
For perfect CSI, we have $\varepsilon_0^2=0$ which yields $a_1=0$. Then \eqref{eq:EE_star} reduces to
\begin{equation}\label{eq:perfect_simplify}
\eta_{EE}^*=\frac{2\kappa a_0}{4 \ln 2 (a_0 P_s^*+4)}.
\end{equation}
Further by substituting $\eta_{SE}^*=K\log_2(1+\frac{a_0 P_s^*}{4})$ into \eqref{eq:perfect_simplify}, it directly gives the corollary.
\end{proof}

Recalling the relationship $2K=L$, Corollary \ref{cora} reveals that the maximal EE $\eta_{EE}^*$ scales as exponentially decaying with respect to $\frac{1}{2K} \eta_{SE}^*$, which can be regarded as the corresponding SE w.r.t. $P_s^*$ of a single RF-chain. Due to the nature of exponentially decaying, there exists a sensitive region with small $\eta_{SE}^*$, where a slight increase in $\eta_{SE}^*$ would greatly decrease $\eta_{EE}^*$; While in the other region with high $\eta_{SE}^*$, $\eta_{EE}^*$ decreases slowly with an increasing $\eta_{SE}^*$. Therefore, whether the green point $\eta_{EE}^*$ falls into the sensitive region is of great importance when considering EE-SE tradeoff. Equivalently, if we take the logarithm on both sides of $\eta_{EE}^*=c_0 4^{-\frac{\eta_{SE}^*}{2K}}$, it directly gives
\begin{equation}\label{eq:green_point_line}
\log(\eta_{EE}^*)=-\frac{\log2}{K}\eta_{SE}^*+c,
\end{equation}
where
\begin{equation}\label{eq:Fig8_c}
c = \log c_0=\log\left(\frac{a_0 \kappa}{8\ln2}\right).
\end{equation} The above relationship indicates that the green points $(\log(\eta_{EE}^*),\eta_{SE}^*)$ actually lie in a straight line with a slope of $-\frac{\log2}{K}$. It implies that a smaller RF chain number admits more EE improvement under a given SE reduction with the optimal EE design. Note that this observation will later be verified by numerical results.

\subsection{Effect of RF-chain Number on EE}

In this part, we look into the effects of limited RF chains on the EE performance. For notational simplicity, we rewrite $\eta_{EE}$ in \eqref{eq:EE2} as
\begin{align}\label{eq:EE_K}
\eta_{EE}=\frac{K\log_2\big(1+\frac{b}{K+a}\big)}{d(K+m)},
\end{align}
where $a=\frac{1}{2P_s\varepsilon_0^2}, b=\frac{\pi N\sigma_0^2}{8\varepsilon_0^2}, d=2P_s/\kappa+2P_0+2NP_{APS}$ and $m=\frac{P_{const}}{d}$.
\begin{mypro}\label{Pro:K_concave}
Given a large but fixed number of relay antennas, there exists a globally optimal choice of the RF chain number, $L^*$, for EE maximization.
\end{mypro}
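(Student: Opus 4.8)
The plan is to treat $\eta_{EE}$ in \eqref{eq:EE_K} as a function of a continuous relaxation of the RF-chain variable $K\in(0,\infty)$, show it is quasi-concave, and then invoke the standard fact (as in the proof of \emph{Theorem}~\ref{optimal_power}) that a quasi-concave function on an interval attains a unique global maximum, so that the optimal integer $L^*=2K^*$ is obtained by comparing the two integers straddling the maximizer. Note that $a,b,d,m$ all depend on $K$ only through the constant power terms, and in fact $a$ and $b$ are constants in $K$ while $d$ and $m$ do not depend on $K$ either once $P_s$ is regarded as fixed — so the only $K$-dependence sits explicitly in the numerator $g(K)=K\log_2\!\bigl(1+\tfrac{b}{K+a}\bigr)$ and the denominator $h(K)=d(K+m)$, both of which are smooth and positive on $(0,\infty)$.

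First I would study the numerator $g(K)$. Writing $g(K)=\tfrac{1}{\ln2}\,K\ln\!\bigl(1+\tfrac{b}{K+a}\bigr)$, I would compute $g'(K)$ and show that $g$ is increasing, concave, and bounded above (as $K\to\infty$, $K\ln(1+\tfrac{b}{K+a})\to b$), which is the familiar ``$S$-free'' shape of a logarithmic-throughput term. Concretely $g'(K)=\tfrac{1}{\ln2}\bigl[\ln(1+\tfrac{b}{K+a})-\tfrac{bK}{(K+a)(K+a+b)}\bigr]$, and one checks $g''(K)<0$ on $(0,\infty)$. Then $\eta_{EE}(K)=g(K)/h(K)$ is a ratio of a positive increasing concave function over a positive affine function. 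I would invoke the standard result that such a ratio is (strictly) quasi-concave on $(0,\infty)$: its superlevel sets $\{K:\ g(K)\ge t\,h(K)\}$ are intervals because $g(K)-t\,h(K)$ is concave for every $t>0$, hence has an interval as its nonnegativity set. Quasi-concavity plus the fact that $\eta_{EE}(K)\to 0$ as $K\to 0^+$ (numerator $\to 0$, denominator $\to dm>0$) and as $K\to\infty$ (numerator bounded, denominator $\to\infty$) forces an interior maximizer $K^*$, and strict quasi-concavity makes it unique; equivalently the stationarity condition $g'(K)h(K)=g(K)h'(K)$, i.e. $g'(K^*)/g(K^*)=1/(K^*+m)$, has a single root. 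Rounding back, $L^*$ is one of $\lfloor 2K^*\rfloor,\lceil 2K^*\rceil$, whichever gives the larger $\eta_{EE}$, and since $2K=L$ must be an even integer one then rounds to the nearest feasible even value; this establishes the ``globally optimal choice of the RF chain number.''

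The main obstacle I anticipate is the bookkeeping around which quantities genuinely depend on $K$: in \eqref{eq:EE_K} the symbols $a,b,d,m$ are defined via $P_s,\varepsilon_0^2,\sigma_0^2,N,\kappa,P_0,P_{APS},P_{const}$, none of which vary with $K$, so the reduction to $g(K)/h(K)$ with $g$ concave and $h$ affine is clean — but I would need to be careful that the problem intends $P_s$ held fixed (rather than, say, $P_s^*(K)$ from \emph{Theorem}~\ref{optimal_power}, which would reintroduce $K$-dependence through the KKT condition and make $h$ no longer affine). Assuming the natural reading with $P_s$ fixed, the only real work is verifying $g''<0$ on all of $(0,\infty)$, which reduces to a one-variable inequality after clearing denominators; I would present this as a short lemma-style computation rather than grinding it out here. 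If instead $P_s$ is optimized jointly, the argument still goes through by the same concave-over-affine template applied to the envelope, but establishing concavity of the envelope in $K$ is the part that would require the most care.
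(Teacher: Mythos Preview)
Your proposal is correct and follows essentially the same route as the paper: write $\eta_{EE}(K)$ as a ratio of the numerator $K\log_2\!\bigl(1+\tfrac{b}{K+a}\bigr)$ over the affine denominator $d(K+m)$, verify the numerator is concave via $f''(K)<0$, and invoke the concave-over-affine $\Rightarrow$ quasi-concave fact from \cite{boydconvex} to conclude a global maximizer exists. The paper in fact computes the second derivative explicitly as $f''(K)=-\tfrac{b[a(2K+2a+b)+b(K+a)]}{(K+a+b)^2(K+a)^2\ln2}<0$, and otherwise argues exactly as you outline (with $P_s$ held fixed, so your caveat is resolved in the natural way).
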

\begin{proof}
Since $2K=L$ is set in the system, the problem is equivalent to find the optimal $K$ for EE maximization. Denote $\eta_{EE}=\frac{f(K)}{g(K)}$ from \eqref{eq:EE_K}, where $f(K)=K\log_2\bigg(1+\frac{b}{K+a}\bigg)$ and $g(K)=d(K+m)$.
After some mathematical manipulations, the second-order derivative of $f(K)$ is given as
\begin{align}
f''(K)=- \frac{b[a(2K+2a+b)+b(K+a)]}{(K+a+b)^2(K+a)^2\ln2}<0,
\end{align}
which indicates that $f(K)$ is concave with respect to $K$.
Meanwhile, it is direct to see that $g(K)$ is linear, thus convex, with regard to $K$. Both $f(K)$ and $g(K)$ are differentiable. Hence, $\eta_{EE}(K)$ is a quasi-concave function with regard to $K$, which from \cite{boydconvex} implies that there exists a globally optimal $K^*$ for EE maximization.
\end{proof}
%

Note that the numerator of $\eta_{EE}$ in \eqref{eq:EEdef} is the SE expression which is accurate when $N>\left\lfloor 4L^2/\pi \right\rfloor$ is guaranteed. Because the denominator of $\eta_{EE}$ scales linearly with $L$, we infer that the maximal EE cannot be achieved at large $L$. Under this circumstance, there exists an optimal $L$ which falls into $N>\left\lfloor 4L^2/\pi \right\rfloor$ and makes our EE analysis accurate. This statement is also verified via simulations. As a growing $N$ up to $1000$ antennas yielding $L<\sqrt{\frac{\pi N}{4}}\approx 28$ RF chains, we will later justify that the EE achieves its maximum with $L<28$ agreeing with numerical verifications.

\section{Numerical Results}\label{sec:simulation}

\subsection{Spectral Efficiency}

\emph{1) Large Rayleigh Fading Channels}

In this section, the obtained observations are verified through Monte-Carlo simulations. In the simulations, we define $\mathrm{SNR}=P_s$ and set $P_r=2KP_s$ without otherwise being specifically stated. We set $K=5$ and training period is $\tau=2K$. The path loss effect is normalized to $\beta_j=1$ for $j=1,2,\cdots,2K$.

\begin{figure}[htbp]
\centering
\includegraphics[width=0.4\textwidth, height=0.34\textwidth]{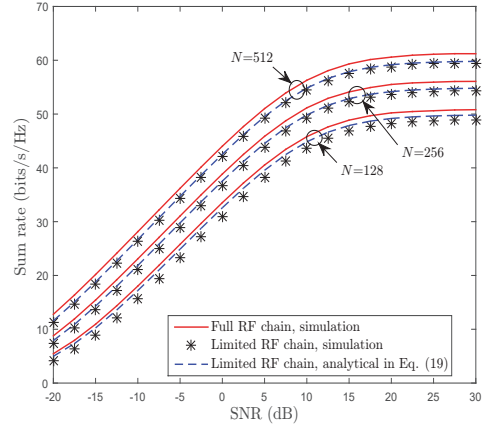}
\caption{Sum rate versus SNR with $P_p=10$dB.}
\label{Fig:sumrate_SNR_Pp}
\end{figure}

\begin{figure}[htbp]
\centering
\includegraphics[width=0.4\textwidth, height=0.34\textwidth]{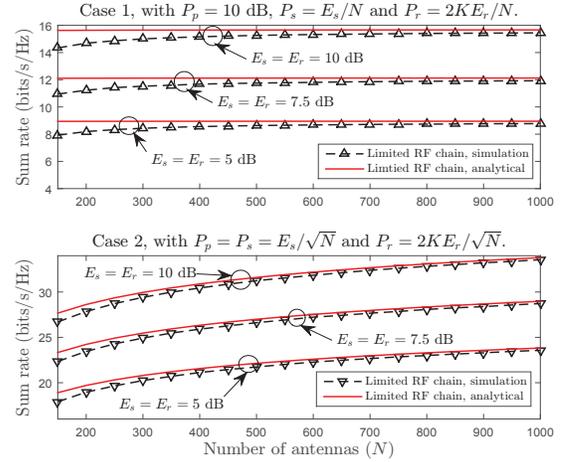}
\caption{Sum rate versus number of antennas for power scaling Cases 1 and 2.}
\label{Fig:powerscaling}
\end{figure}

Fig.~\ref{Fig:sumrate_SNR_Pp} shows the sum rate of the relay network versus system SNR under $P_p = 10$ dB. For comparison, the sum rate of the full RF-chain case is also provided as a benchmark. It is validated that our derived analytical result matches well with the exact sum rate and it becomes more accurate as $N$ grows larger. It reveals that the hybrid ZF scheme performs measurably close to the full RF-chain case. Notice that the sum rate performance is shown to saturate in the large SNR regime due to the effect of imperfect channel estimate. Moreover, in Fig.~\ref{Fig:powerscaling}, the effects of power scaling for Cases 1 and 2 are exemplified, respectively. As plotted in the figure, the sum rate converges for both cases when $N\rightarrow \infty$, as predicted in Section~\ref{sec:power_scaling}. However, the sum rate for Case 1 converges much faster than Case 2 as $\sqrt{N}$ grows at a much slower speed compared to $N$. Scaling down the pilot power in Case 2 introduces a further degradation of the system performance due to increased channel estimation error. Therefore, the data transmission power cannot be cut down with the same scaling law as in Case 1 in order to maintain the asymptotically same performance.

In Fig.~\ref{Fig:quantized}, the impact of quantized phase shifters on the proposed hybrid scheme is presented under various values of $B$. It is observed that there exists a significant gap between $B=1$ and ideal phase shifters, while $B=2$ performs quite close to the ideal case. For a further increased number of quantization bits $B=4$, the performance almost perfectly agrees with the ideal phase shifters. Therefore, it can be concluded that the proposed hybrid precoding scheme could give satisfying performance even with heavily quantized phase shifters, like $B=2$ and $B=4$.
\begin{figure}[htbp]
\centering
\includegraphics[width=0.4\textwidth, height=0.35\textwidth]{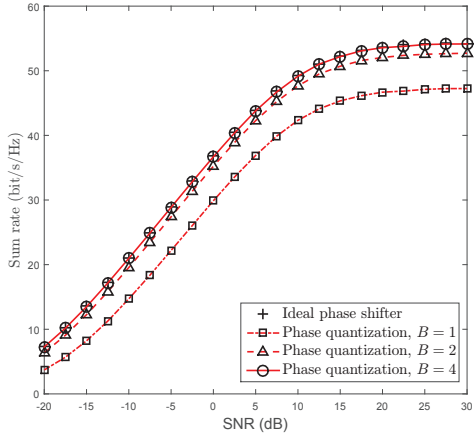}
\caption{Sum rate versus SNR under different phase shifter quantization bits, with $2K=10$, $N=256$ and $P_p=10$dB.}
\label{Fig:quantized}
\end{figure}
\begin{figure}[htbp]
\centering
\includegraphics[width=0.4\textwidth, height=0.35\textwidth]{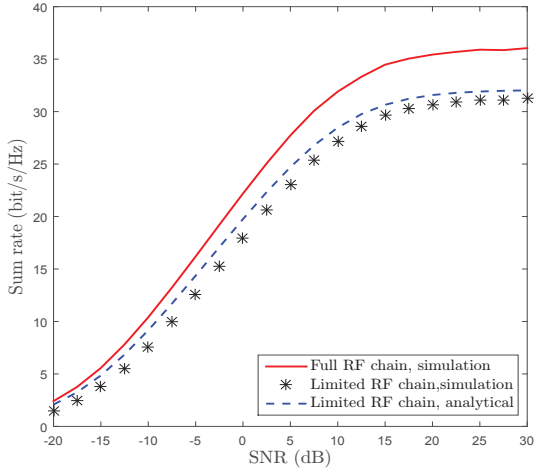}
\caption{System throughput versus SNR, with $N=64$, $\tau=2K=8$, $T=600$ and $P_p=10$dB.}
\label{Fig:Overhead}
\end{figure}

\emph{2) Possible Impacts of Channel Estimation Overhead}

By incorporating the training overhead caused by the proposed channel estimation, Fig. \ref{Fig:Overhead} presents the system throughput versus SNR. Specifically, the sum rate is multiplied by a factor $\eta_{limited}=\frac{T-L\cdot N/L}{T}=\frac{T-N}{T}$. For comparison, the sum rate for the full digital processing is multiplied by $\eta_{full}\frac{T-\tau}{T}=\frac{T-2K}{T}$. It is calculated that $\frac{\eta_{full}}{\eta_{limited}}=\frac{T-2K}{T-N}\approx 1.1$ under a typical massive MIMO setup $2K=4$, $N=64$ and $T=600$, which makes no significant change to the throughput, as shown in Fig. \ref{Fig:Overhead}.

Moreover, the impact of channel coherence time $T$ is depicted in Fig. \ref{Fig:coherence_time}. The observations show that the performance gap between full RF chain and limited RF chain decreases as $T$ becomes larger, which does indicate the importance of an effective channel estimation strategy for the limited RF chain system especially for high mobility scenarios with small $T$.

\begin{figure}[htbp]
\centering
\includegraphics[width=0.4\textwidth, height=0.35\textwidth]{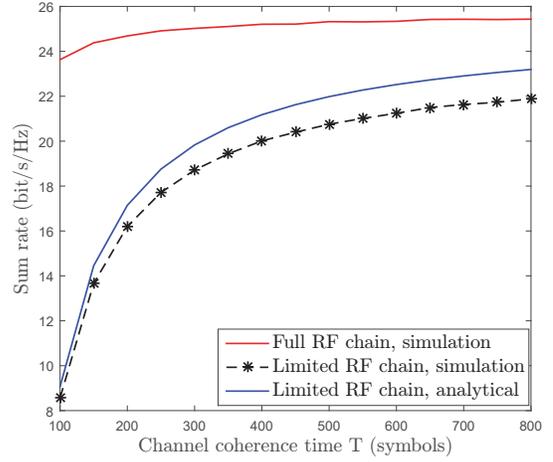}
\caption{System throughput versus channel coherence time $T$, with $\tau=2K=8$, $N=64$, $P_s=P_p=5$dB and $P_r=2KP_s$.}
\label{Fig:coherence_time}
\end{figure}

\begin{figure}[htbp]
\centering
\includegraphics[width=0.4\textwidth, height=0.35\textwidth]{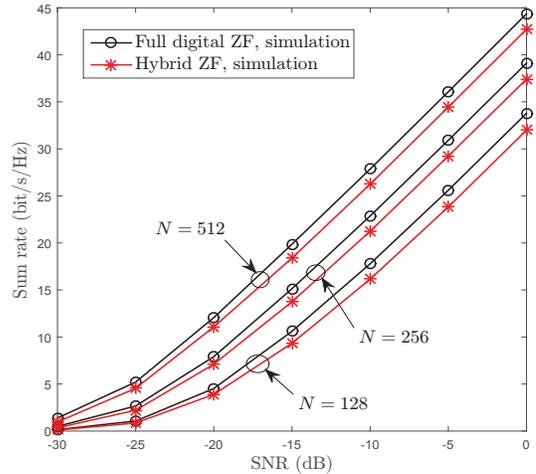}
\caption{Sum rate versus SNR under mmWave channel scenarios, with $K=5$, $d=\frac{1}{2}$ and $N_p=10$.}
\label{Fig:mmWave_N}
\end{figure}

\emph{3) Large mmWave Channels}

Apart from i.i.d. Rayleigh fading channels, we also investigate the performance of our proposed hybrid scheme under the geometric model, which is currently leveraging as a more realistic channel model for mmWave massive MIMO.
Specifically, the downlink channel from user $k$ to the relay, denoted as $\mathbf{g}_{k}^{T}$, can be characterized as \cite{ngo2013multicell,alkhateeb2014channel,Ayach2014Spatially}
\begin{equation}
\mathbf{g}_{k}^{T}=\sqrt{\frac{N}{N_p}}\sum\limits_{l=1}^{N_p}\alpha_{l}^{k}\bm{\alpha}^H(\theta_{k,l}),
\end{equation}
where each user is assumed to observe the same number of propogation paths, denoted by $N_p$, $\alpha_{l}^{k}$ is the gain of the $l$-th path of user $k$ distributed as $\mathcal{CN}(0,1)$, and $\theta_{k,l}$ is the random azimuth angle of departure drawn independently from uniform distributions over $[0,2\pi]$. $\bm{\alpha}^H(\theta_{k,l})$ is the array response vector depending on array structures. If we assume a uniform linear array (ULA) here, it can be given as
\begin{equation}
\bm{\alpha}^H(\theta_{k,l})=\frac{1}{\sqrt{N}}\left[1,e^{j2\pi d\sin(\theta_{k,l})},\cdots, e^{j(N-1)2\pi d\sin(\theta_{k,l})}\right],
\end{equation}
where $d$ refers to the normalized antenna spacing.
It is observed from Fig. \ref{Fig:mmWave_N} that under the more practical channel setup, the proposed hybrid processing scheme still performs remarkably close to the full digital precoding, which indicates that the advantages of the proposed scheme persist for more realistic large mmWave channels.

\begin{figure}[htbp]
\centering
\includegraphics[width=0.4\textwidth, height=0.35\textwidth]{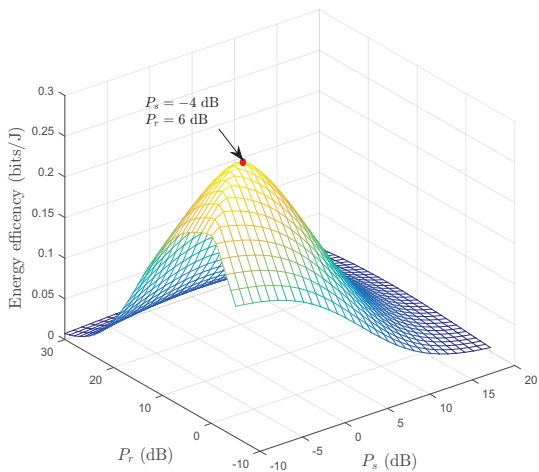}
\caption{Three-dimensional plot of EE versus $P_s$ and $P_r$, with $N=256$, $\kappa=0.375$, $P_p=10$dB, $P_{const}=20$W, $P_0=1$W and $P_{APS}=0.02$W.}
\label{Fig:3Dplot}
\end{figure}
\begin{figure}[htbp]
\centering
\includegraphics[width=0.4\textwidth, height=0.35\textwidth]{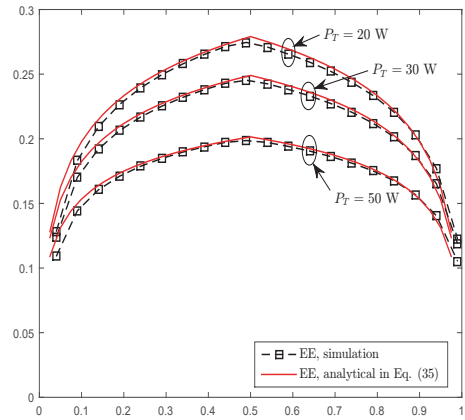}
\caption{EE under total power constraint $2KP_s+P_r=P_T$, with $N=256$, $\kappa=0.375$, $P_p=10$dB, $P_{const}=20$W, $P_0=1$W and $P_{APS}=0.02$W.}
\label{Fig:power_constraint}
\end{figure}

\subsection{Energy Efficiency}

Fig.~\ref{Fig:3Dplot} presents a 3D plot of EE with respect to $P_s$ and $P_r$. By searching the maximal EE, it is found out that the optimal EE in this test is achieved at $P_s = -4$ dB and $P_r = 6$ dB satisfying $P_r = 2 K P_s$ with $K=5$. This coincides with our observation in \emph{Theorem}~2 that the maximal EE is always achieved at $P_r=2KP_s$. Alternatively, Fig.~\ref{Fig:power_constraint} depicts the EE performance under the total transmit power constraint $2KP_s+P_r=P_T$ for different values of $P_T$. The horizontal axis represents the ratio between $P_r$ and the total transmit power. It can be inferred from the figure that the maximal EE is always achieved at $P_r=\frac{1}{2}P_T$, i.e. $P_r=2KP_s$, for any transmit power constraint $2KP_s+P_r=P_T$, which also verifies the first part of \emph{Theorem}~2.

To further illustrate the condition of $P_r=2KP_s$ for EE maximization, the contour plot of EE is presented in Fig.~\ref{Fig:Contour}. For comparison, the contour of total power constraint $2KP_s+P_r=P_T$ is plotted in black dotted lines while the contour for EE performance is plotted in red solid lines. It is found that the contours for EE are tangent to the contours for $P_T$ and the tangent points lie exactly in the line labeled as $P_r=2KP_s$ in the figure. This convinces us that the power allocation in terms of EE maximization is to set $2KP_s=P_r$ under an arbitrary total power constraint.
\begin{figure}[htbp]
\centering
\includegraphics[width=0.4\textwidth, height=0.35\textwidth]{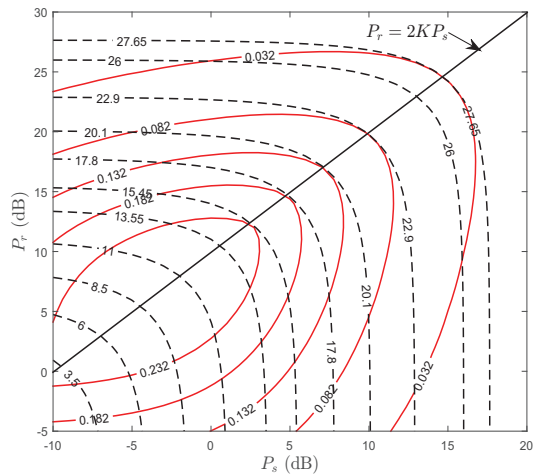}
\caption{Contour plot of energy efficiency versus $P_s$ and $P_r$, with $N=256$, $\kappa=0.375$, $P_{const}=20$W, $P_p=10$dB, $P_0=1$W and $P_{APS}=0.02$W.}
\label{Fig:Contour}
\end{figure}

Fig.~\ref{Fig:Green_points_line} presents the EE-SE relationship under perfect CSI for different combinations of system parameters $(P_0, P_{const}, P_{APS})$. As revealed in \emph{Corollary}~1, there always exists a single green point and the green point $(\log(\eta_{EE}^*), \eta_{SE}^*)$ does lie in a decreasing straight line, which implies a larger $\eta_{SE}^*$ always leads to a lower $\eta_{EE}^*$ in the optimal EE design.
\begin{figure}[htbp]
\centering
\includegraphics[width=0.4\textwidth, height=0.35\textwidth]{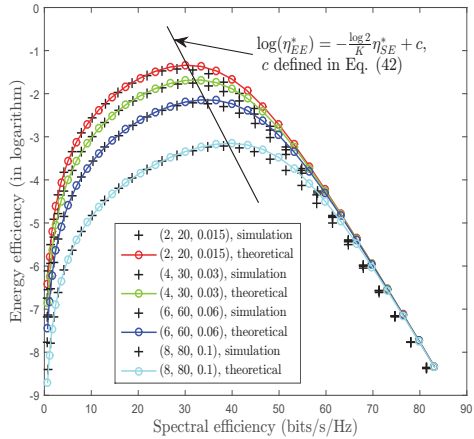}
\caption{EE vs. SE under perfect CSI and different combinations of $(P_0, P_{const}, P_{APS})$, with $N=128$ and $\kappa=0.375$.}
\label{Fig:Green_points_line}
\end{figure}

Fig.~\ref{Fig:EE_optimal_K_different_N} depicts the relationship of system EE versus the RF chain number $L=2K$ under $N=128$, $256$ and $512$, respectively. It is clearly shown that the optimal $L$ falls into the regime $N>\left\lfloor 4L^2/\pi \right\rfloor$ for different numbers of antennas $N$, which validates our statements in Section~\ref{sec:EE}. The optimal RF-chain number is not sensitive to $N$, as the three curves give almost the same optimal choice around $L^*=2K^*=14$, while the EE is dramatically decreased when the relay is equipped with more antennas for a fixed number of RF chains. This is because the SE scales as $\log N$ while the total power consumption scales linearly with $N$.

\begin{figure}[htbp]
\centering
\includegraphics[width = 0.4\textwidth, height=0.35\textwidth]{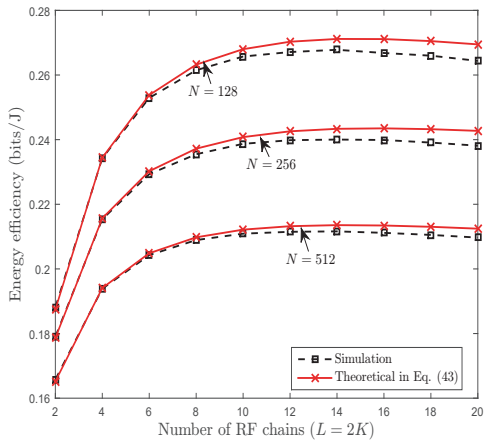}
\caption{EE vs. the number of RF chains with $P_s=P_p=5$ dB, $\kappa=0.375$, $P_{const}=20$W, $P_0=1$W, and $P_{APS}=20$mW.}
\label{Fig:EE_optimal_K_different_N}
\end{figure}

\section{Conclusion}\label{sec:conclusion}

In this paper, we analyzed both the spectral and energy efficiency of a massive MIMO relay network with practical RF-chain constraint. When $N>\left\lfloor 4L^2/\pi \right\rfloor$ is satisfied, for a fixed pilot power, each user and the relay can achieve power saving by scaling down the source and relay transmit power by $1/N$ and $2K/N$, respectively. While if we scale down the transmit power of the pilot and data transmission simultaneously, each user and the relay can only scale down their transmit power by $1/\sqrt{N}$ and $2K/\sqrt{N}$, respectively. In terms of EE, we prove that the maximal EE is always achieved at $P_r =2K P_s$. This condition happens to be also the optimal power splitting strategy for EE maximization under an arbitrary total power constraint. Further given a fixed number of RF-chains, there exists a globally optimal transmit power which yields the best EE performance.

\appendices
\section{Lemma \ref{lemma_uplink}}
\begin{mylemma}\label{lemma_uplink}
For the uplink phase of the two-way massive relay network, the achievable rate $\tilde{R}_{k\rightarrow R}$ behaves as
\begin{align}
\lim\limits_{N\rightarrow\infty}\frac{\tilde{R}_{k\rightarrow R}}{\log_2\left(1+\frac{P_s\pi N\sigma_{k}^2}{4(1+P_s\sum_{j=1}^{2K}\varepsilon_{j}^2)}\right)}=1.
\end{align}
\end{mylemma}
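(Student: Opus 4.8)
The plan is to evaluate the three expectations inside the effective SINR of \eqref{eq:uplinkapprox} one by one and to show that, for large $N$, all of them are governed by the single scalar $\mathrm{AN}_k=\mathbb{E}\{\|\mathbf{w}_k^T\mathbf{F}_r\|^2\}$. First I would use the zero-forcing construction $\mathbf{W}_r=[\mathbf{F}_r\hat{\mathbf{G}}]^{-1}$, which gives the exact identity $\mathbf{w}_k^T\mathbf{F}_r\hat{\mathbf{g}}_j=\delta_{kj}$, together with the MMSE decomposition $\mathbf{g}_j=\hat{\mathbf{g}}_j+\bm{\mathcal{E}}_j$ and the independence of $\hat{\mathbf{G}}$ and $\bm{\mathcal{E}}_G$. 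Conditioning on $\hat{\mathbf{G}}$ and using $\mathbb{E}\{\bm{\mathcal{E}}_j\,|\,\hat{\mathbf{G}}\}=\mathbf{0}$ and $\mathbb{E}\{\bm{\mathcal{E}}_j\bm{\mathcal{E}}_j^H\,|\,\hat{\mathbf{G}}\}=\varepsilon_j^2\mathbf{I}_N$, one obtains at once $\mathbb{E}\{\mathbf{w}_k^T\mathbf{F}_r\mathbf{g}_k\}=1$, $\mathbb{V}\{\mathbf{w}_k^T\mathbf{F}_r\mathbf{g}_k\}=\varepsilon_k^2\,\mathrm{AN}_k$, and $\mathrm{MP}_k=P_s\big(\sum_{j\neq k}\varepsilon_j^2\big)\mathrm{AN}_k$, so the SINR collapses to $P_s/\big[(1+P_s\sum_{j=1}^{2K}\varepsilon_j^2)\,\mathrm{AN}_k\big]$. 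Everything then reduces to showing $N\,\mathrm{AN}_k\to 4/(\pi\sigma_k^2)$.

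To do that I would write $\mathrm{AN}_k=\mathbb{E}\{[\mathbf{W}_r\mathbf{F}_r\mathbf{F}_r^H\mathbf{W}_r^H]_{kk}\}$ and invoke two asymptotic facts, recalling that $[\mathbf{F}_r]_{k,i}=N^{-1/2}\hat{g}_{i,k}^{*}/|\hat{g}_{i,k}|$. (i) $\mathbf{F}_r\mathbf{F}_r^H\to\mathbf{I}_{2K}$: its diagonal entries equal $1$ exactly, while the off-diagonal entries $N^{-1}\sum_{i}(\hat{g}_{i,k}^{*}/|\hat{g}_{i,k}|)(\hat{g}_{i,l}/|\hat{g}_{i,l}|)$ are averages of i.i.d. zero-mean unit-modulus terms and vanish by the law of large numbers. (ii) $\mathbf{F}_r\hat{\mathbf{G}}$ is asymptotically diagonal: $[\mathbf{F}_r\hat{\mathbf{G}}]_{kk}=N^{-1/2}\sum_i|\hat{g}_{i,k}|\to\sqrt{\pi N}\,\sigma_k/2$ since the $|\hat{g}_{i,k}|$ are Rayleigh with mean $\sqrt{\pi}\,\sigma_k/2$, whereas for $j\neq k$, $[\mathbf{F}_r\hat{\mathbf{G}}]_{kj}=N^{-1/2}\sum_i(\hat{g}_{i,k}^{*}/|\hat{g}_{i,k}|)\hat{g}_{i,j}$ is a normalized sum of i.i.d. zero-mean terms which, by the CLT, converges to a $\mathcal{CN}(0,\sigma_j^2)$ variable and hence stays $O(1)$. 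Consequently $\mathbf{F}_r\hat{\mathbf{G}}=\tfrac{1}{2}\sqrt{\pi N}\,\mathrm{diag}(\sigma_1,\dots,\sigma_{2K})(\mathbf{I}_{2K}+\mathbf{E})$, with $\mathbf{E}$ tending entrywise to $\mathbf{0}$ and its off-diagonal entries of order $N^{-1/2}$; inverting through the Neumann expansion \eqref{eq:inversion} gives $\mathbf{W}_r\to\tfrac{2}{\sqrt{\pi N}}\,\mathrm{diag}(\sigma_1^{-1},\dots,\sigma_{2K}^{-1})$, whence $N\,\mathrm{AN}_k\to 4/(\pi\sigma_k^2)$.

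Putting the pieces together, the SINR equals $P_s/\big[(1+P_s\sum_j\varepsilon_j^2)\,\mathrm{AN}_k\big]$, which divided by $\tfrac{P_s\pi N\sigma_k^2}{4(1+P_s\sum_j\varepsilon_j^2)}$ equals $4/\big[\pi\sigma_k^2\,(N\,\mathrm{AN}_k)\big]\to 1$; since this common quantity diverges, the elementary estimate $\log_2(1+a_N)/\log_2(1+b_N)\to 1$ valid whenever $a_N/b_N\to 1$ and $b_N\to\infty$ (the same device used in the proof of the Proposition above) yields the claimed ratio limit for $\tilde{R}_{k\to R}$.

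The main obstacle is the inversion in step (ii): one must bound the operator norm of the $2K\times2K=L\times L$ perturbation $\mathbf{E}$ tightly enough to guarantee both invertibility of $\mathbf{I}_{2K}+\mathbf{E}$ and convergence of the Neumann series. Using $\|\mathbf{E}\|\le\|\mathbf{E}\|_F$ with $\mathbb{E}\{\|\mathbf{E}\|_F^2\}$ of order $4L^2/(\pi N)$, one sees that $\|\mathbf{E}\|<1$ holds with probability tending to one precisely when $N>4L^2/\pi$, which (after accounting for integrality of $N$) is exactly the hypothesis $N>\lfloor 4L^2/\pi\rfloor$ and is the regime in which the multi-pair interference is asymptotically annihilated. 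A secondary technicality is to upgrade the in-probability limits above to limits of the expectations defining $\mathrm{AN}_k$, i.e. to verify uniform integrability of $N\|\mathbf{w}_k^T\mathbf{F}_r\|^2$, and to control the behaviour of $\hat{g}_{i,k}^{*}/|\hat{g}_{i,k}|$ near the origin.
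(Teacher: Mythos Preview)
Your proposal is correct and follows essentially the same route as the paper: use the ZF identity and MMSE independence to reduce the SINR to $P_s/\big[(1+P_s\sum_j\varepsilon_j^2)\,\mathbb{E}\{\|\mathbf{w}_k\|^2\}\big]$, split $\mathbf{F}_r\hat{\mathbf{G}}$ into its diagonal part (of order $\sqrt{\pi N}\sigma_k/2$) plus an $O(1)$ off-diagonal perturbation, and invert via the Neumann series under the Frobenius-norm condition $\|\mathbf{A}\mathbf{D}^{-1}\|_F^2<1$, which is precisely where $N>\lfloor 4L^2/\pi\rfloor$ enters. The only noteworthy difference is that for the passage from $\|\mathbf{w}_k\|^2\to 1/h_{k,k}^2$ to $\mathbb{E}\{\|\mathbf{w}_k\|^2\}\sim 4/(\pi N\sigma_k^2)$ the paper does not invoke uniform integrability abstractly but instead proves a dedicated result (Proposition~\ref{taylor_expansion}) that $\lim_{a\to\infty}\mathbb{E}\{a^2/(y+a)^2\}=1$ for $y\sim\mathcal{N}(0,\sigma^2)$ via Taylor expansion of $(1+y/a)^{-2}$ and Gaussian moment formulas; this is the concrete device that closes the gap you flagged as a ``secondary technicality.''
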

\begin{proof}
In order to derive the asymptotic rate, we first calculate the expectation terms in \eqref{eq:uplinkapprox} one by one in the following.

$\cdot$ Compute $\mathbb{E}\left\{\mathbf{w}_{k}^T\mathbf{F}_r\mathbf{g}_{k}\right\}$: Because $\mathbf{W}_r=[\mathbf{F}_r\hat{\mathbf{G}}]^{-1}$ and $\bm{\mathcal{E}}_G=\mathbf{G}-\hat{\mathbf{G}}$, we have
\begin{align}\label{eq:ZF}
\mathbf{W}_r\mathbf{F}_r\mathbf{G}=\mathbf{W}_r\mathbf{F}_r(\hat{\mathbf{G}}+\bm{\mathcal{E}}_G)=\mathbf{I}_{2K}+\mathbf{W}_r\mathbf{F}_r\bm{\mathcal{E}}_G.
\end{align}
It directly follows
\begin{equation}\label{eq:ZFdiagonal}
\mathbf{w}_{k}^T\mathbf{F}_r\mathbf{g}_{k}=1+\mathbf{w}_{k}^T\mathbf{F}_r\mathbf{\mathcal{E}}_{k},
\end{equation}
where $\mathbf{\mathcal{E}}_{k}$ and $\mathbf{w}_{k}^T\mathbf{F}_r$ are independent, and $\mathbf{\mathcal{E}}_{k}$ is a zero-mean random vector. By taking expectations over $\mathbf{\mathcal{E}}_{k}$ in \eqref{eq:ZFdiagonal}, we have
\begin{equation}\label{eq:desired1}
\mathbb{E}\left\{\mathbf{w}_{k}^T\mathbf{F}_r\mathbf{g}_{k}\right\}=1.
\end{equation}

$\cdot$ Compute $\mathbb{V}\{\mathbf{w}_{k}^T\mathbf{F}_r\mathbf{g}_{k}\}$: From \eqref{eq:ZFdiagonal} and \eqref{eq:desired1}, it gives
\begin{align}\label{eq:Var1}
\mathbb{V}\{\mathbf{w}_{k}^T\mathbf{F}_r\mathbf{g}_{k}\}=&\mathbb{E}\left\{|\mathbf{w}_{k}^T\mathbf{F}_r\mathbf{\mathcal{E}}_{k}|^2\right\}=\varepsilon^2_{k}\mathbb{E}\left\{\mathbf{w}_{k}^T\mathbf{F}_r\mathbf{F}_r^H\mathbf{w}_{k}^*\right\}.
\end{align}
From \eqref{eq:analogprecoder}, entries of $\mathbf{F}_r$ are i.i.d. variables with zero mean and variance $\frac{1}{N}$. Applying the law of large numbers, we have $\mathbf{F}_r\mathbf{F}_r^H\xrightarrow{a.s.}\mathbf{I}_{2K}$. Then
\begin{equation}\label{eq:variance}
\mathbb{V}\{\mathbf{w}_{k}^T\mathbf{F}_r\mathbf{g}_{k}\} = \varepsilon^2_{k}\mathbb{E}\left\{\|\mathbf{w}_{k}^T\|^2\right\}.
\end{equation}

$\cdot$ Compute $\text{MP}_{k}$: According to \eqref{eq:ZF}, for $k\neq j$, we have
\begin{equation}\label{eq:ZFoffdiagonal}
\mathbf{w}_{k}^T\mathbf{F}_r\mathbf{g}_{j}=\mathbf{w}_{k}^T\mathbf{F}_r\mathbf{\mathcal{E}}_{j}.
\end{equation}
Following the similar method for computing $\mathbb{V}\{\mathbf{w}_{k}^T\mathbf{F}_r\mathbf{g}_{k}\}$, we get
\begin{align}\label{eq:MPAN}
\text{MP}_{k}=P_s\mathbb{E}\left\{\|\mathbf{w}_{k}^T\|^2\right\}\sum_{j\neq k}^{2K}\varepsilon^2_{j},~~\text{AN}_{k}=\mathbb{E}\left\{\|\mathbf{w}_{k}^T\|^2\right\}.
\end{align}
By using the results in \eqref{eq:desired1}, \eqref{eq:variance} and \eqref{eq:MPAN}, we have rewritten \eqref{eq:uplinkapprox} as:
\begin{align}\label{eq:wk}
\tilde{R}_{k\rightarrow R}=\log_2\left(1+\frac{P_s}{[1+P_s\sum_{j=1}^{2K}\varepsilon_{j}^2]\mathbb{E}\left\{\|\mathbf{w}_{k}^T\|^2\right\}}\right).
\end{align}
Now, the remaining task is to investigate the expression for $\mathbb{E}\left\{\|\mathbf{w}_{k}^T\|^2\right\}$ where $\mathbf{w}_{k}^T$ comes from the digital ZF precoder $\mathbf{W}_r=[\mathbf{F}_r\hat{\mathbf{G}}]^{-1}$. Denote $\mathbf{f}_{k}$ as the $k$-th column of $\mathbf{F}_r$.
Consider the equivalent uplink channel seen from baseband:
\begin{align}\label{eq:equivalentchannel}
\mathbf{H}_{eq}=\mathbf{F}_r\hat{\mathbf{G}}
=[\mathbf{f}_{1}~\mathbf{f}_{2}\cdots\mathbf{f}_{2K-1}~\mathbf{f}_{2K}]^T[\hat{\mathbf{g}}_{1}~\hat{\mathbf{g}}_{2}\cdots\hat{\mathbf{g}}_{2K-1}~\hat{\mathbf{g}}_{2K}].
\end{align}
The diagonal terms are $h_{k,k}=\mathbf{f}^T_{k}\hat{\mathbf{g}}_{k}$ $(k=1,\cdots,2K)$  where $h_{i,j}$ is the $(i,j)$-th element of $\mathbf{H}_{eq}$. Denote the $i$-th element of $\hat{\mathbf{g}}_{k}$ and $\mathbf{f}_{k}$ with $\hat{g}_{i,k}$ and $f_{i,k}$, respectively. Because
$\mathbf{f}_{k}$ is designed as
$$f_{i,k}=\frac{1}{\sqrt{N}}e^{-j\arg(\hat{g}_{i,k})},$$ we have
\begin{equation}
h_{k,k}=\mathbf{f}^T_{k}\hat{\mathbf{g}}_{k}=\frac{1}{\sqrt{N}}\sum\limits_{i=1}^{N}|\hat{g}_{i,k}|.
\end{equation}

Recalling that $\hat{\mathbf{g}}_{k}\sim\mathcal{CN}(\mathbf{0},\sigma^2_{k}\mathbf{I}_N)$, $\{\hat{g}_{i,k}\}$'s are i.i.d. as $\mathcal{CN}(0,\sigma^2_{k})$. Then $|\hat{g}_{i,k}|$ follows the Rayleigh distribution with mean $\frac{\sigma_k \sqrt{\pi}}{2}$ and variance
$(1-\frac{\pi}{4})\sigma_k^2$. Applying the Central Limit Theorem, it indicates that:
\begin{equation}\label{eq:diagonal}
h_{k,k}\sim\mathcal{N}\left(\frac{\sigma_k \sqrt{\pi N}}{2},\bigg(1-\frac{\pi}{4}\bigg)\sigma_k^2\right),~N\rightarrow \infty.
\end{equation}

On the other hand, the law of large numbers indicates that
\begin{equation}\label{eq:diagonal_almostsurely}
\frac{h_{k,k}}{\sqrt{N}}=\frac{\mathbf{f}^T_{k}\hat{\mathbf{g}}_{k}}{\sqrt{N}}=\frac{1}{N}\sum\limits_{i=1}^{N}|\hat{g}_{i,k}|\xrightarrow{a.s.}\mathbb{E}\{|\hat{g}_{i,k}|\}=\frac{\sigma_k \sqrt{\pi}}{2}.
\end{equation}

The off-diagonal terms are $h_{j,k}=\mathbf{f}^T_{j}\hat{\mathbf{g}}_{k}~(j\neq k)$, analyzing its real and imaginary parts respectively using the Central Limit Theorem followed by proving their independence reveals that \cite{liang2014low}
\begin{equation}\label{eq:offdiagonal}
h_{j,k}\sim\mathcal{CN}\left(0,\sigma_k^2\right).
\end{equation}
Now we rewrite $\mathbf{H}_{eq}$ by its diagonal and off-diagonal components as
\begin{equation}\label{eq:Heq_rewrite}
\mathbf{H}_{eq}=\mathbf{D}+\mathbf{A},
\end{equation}
where $\mathbf{D}=\text{diag}(h_{1,1},h_{2,2},\cdots,h_{2K-1,2K-1},h_{2K,2K}),$
and the $(k,j)$-th element of $\mathbf{A}$ is $h_{k,j}~(k\neq j)$ while the diagonal elements are all zeros. Notice that the off-diagonal terms $h_{k,j}$ could be treated negligible compared to the diagonal terms $h_{k,k}$ when $N$ goes infinitely large. Intuitively the uplink ZF precoder $\mathbf{W}_r=\mathbf{H}_{eq}^{-1}$ can be well approximated as $\mathbf{D}^{-1}$ when $N\rightarrow \infty$. In fact, this is proved by applying \emph{Propositions}~\ref{matrix_inversion} and \ref{taylor_expansion} in Appendix~C. By applying \emph{Proposition}~\ref{matrix_inversion}, we have $$\lim\limits_{N\rightarrow\infty}\mathbf{D}(\mathbf{D}+\mathbf{A})^{-1}=\lim\limits_{N\rightarrow\infty}\mathbf{D}\mathbf{W}_r=\mathbf{I}_{2K},$$ which yields $\lim\limits_{N\rightarrow\infty}h_{k,k}\mathbf{w}_k^T=\mathbf{e}_k^T$, where $\mathbf{e}_k^T$ is the $k$-th row of $\mathbf{I}_{2K}$. Therefore, it is calculated that $\lim\limits_{N\rightarrow\infty}\|\mathbf{w}_k\|^2 h_{k,k}^2$=1, or equivalently $\lim\limits_{N\rightarrow\infty}\|\mathbf{w}_k\|^2=\frac{1}{h_{k,k}^2}$ which further gives $$\lim\limits_{N\rightarrow\infty}\frac{\mathbb{E}\{\|\mathbf{w}_{k}\|^2\}}{\mathbb{E}\left\{\frac{1}{h_{k,k}^2}\right\}}=1=\lim\limits_{N\rightarrow\infty}\frac{\mathbb{E}\{\|\mathbf{w}_{k}\|^2\}}{\mathbb{E}\left\{1/(y+\frac{\sigma_k \sqrt{\pi N}}{2})^2\right\}}.$$
 Meanwhile, according to \emph{Proposition 3} in Appendix~C, we have $\lim\limits_{N\rightarrow\infty}\frac{\mathbb{E}\left\{1/(y+\frac{\sigma_k \sqrt{\pi N}}{2})^2\right\}}{4/\sigma_k^2 \pi N}=1,$
where $y\sim\mathcal{N}(0,\sigma^2)$ with $\sigma^2=(1-\frac{\pi}{4})\sigma_k^2$. Hence,
\begin{equation}\label{eq:wkexpression}
\lim\limits_{N\rightarrow\infty}\frac{\mathbb{E}\{\|\mathbf{w}_{k}\|^2\}}{4/\sigma_k^2 \pi N}=1.
\end{equation}
Now we complete the proof by substituting \eqref{eq:wkexpression} into \eqref{eq:wk}.
\end{proof}

\section{Lemma \ref{lemma_downlink}}
\begin{mylemma}\label{lemma_downlink}
For the downlink transmission of the two-way massive relay network, the achievable rate $\tilde{R}_{R\rightarrow k'}$ behaves as
\begin{align}
\lim\limits_{N\rightarrow\infty}\frac{\tilde{R}_{R\rightarrow k'}}{\log_2\left( 1+\frac{P_r\pi N} {4(1+P_r\varepsilon^2_{k'})\sum_{j=1}^{2K}\frac{1}{\sigma^2_{j}}}\right)}=1.
\end{align}
\end{mylemma}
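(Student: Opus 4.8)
The plan is to follow the blueprint of the uplink argument in \emph{Lemma}~\ref{lemma_uplink}, exploiting the fact that the broadcast-phase precoders are merely transposed/permuted copies of the multiple-access ones: $\mathbf{F}_t=\mathbf{F}_r^T$ and $\mathbf{W}_t=\mathbf{W}_r^T\mathbf{P}$, so that the $k$-th column $\mathbf{v}_k$ of $\mathbf{W}_t$ equals $\mathbf{w}_{k'}$, i.e.\ the $k'$-th row of $\mathbf{W}_r$ read as a column. This lets me reuse the key estimate $\mathbb{E}\{\|\mathbf{w}_j\|^2\}\to 4/(\sigma_j^2\pi N)$ from \eqref{eq:wkexpression} rather than re-deriving it.

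First I would pin down the normalization $\mu$. From $\mathbb{E}\{\|\mathbf{s}[t+1]\|^2\}=1$ together with $\mathbb{E}\{\mathbf{x}\mathbf{x}^H\}=\mathbf{I}$ (and independence of data and channel) one gets $1/\mu^2=\mathbb{E}\{\mathrm{tr}(\mathbf{F}_t^H\mathbf{F}_t\mathbf{W}_t\mathbf{W}_t^H)\}$. Since $\mathbf{F}_t^H\mathbf{F}_t=(\mathbf{F}_r\mathbf{F}_r^H)^*\xrightarrow{a.s.}\mathbf{I}_{2K}$ (established inside the proof of \emph{Lemma}~\ref{lemma_uplink}) and $\|\cdot\|_F$ is invariant under transposition and right-multiplication by the permutation $\mathbf{P}$, this converges to $\mathbb{E}\{\|\mathbf{W}_r\|_F^2\}=\sum_{j=1}^{2K}\mathbb{E}\{\|\mathbf{w}_j\|^2\}\to\frac{4}{\pi N}\sum_{j=1}^{2K}\frac{1}{\sigma_j^2}$; this is the value occupying the noise slot $1/\mu^2$ in \eqref{eq:received2}.

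Next I would evaluate the remaining quantities in \eqref{eq:received2} via the same mean-plus-uncorrelated-effective-noise decomposition used in the uplink. The zero-forcing identity $\hat{\mathbf{G}}^T\mathbf{F}_t\mathbf{W}_t=\mathbf{P}$ gives $\hat{\mathbf{g}}_{k'}^T\mathbf{F}_t\mathbf{v}_k=[\mathbf{P}]_{k',k}=1$ and $\hat{\mathbf{g}}_{k'}^T\mathbf{F}_t\mathbf{v}_j=[\mathbf{P}]_{k',j}=0$ for $j\neq k$. Writing $\mathbf{g}_{k'}=\hat{\mathbf{g}}_{k'}+\bm{\mathcal{E}}_{k'}$ with $\bm{\mathcal{E}}_{k'}\sim\mathcal{CN}(\mathbf 0,\varepsilon_{k'}^2\mathbf{I}_N)$ independent of $\hat{\mathbf{G}}$ (hence of $\mathbf{F}_t$ and every $\mathbf{v}_j$) then yields $\mathbb{E}\{\mathbf{g}_{k'}^T\mathbf{F}_t\mathbf{v}_k\}=1$ and, exactly as in \eqref{eq:Var1}--\eqref{eq:variance}, $\mathbb{V}\{\mathbf{g}_{k'}^T\mathbf{F}_t\mathbf{v}_k\}\to\varepsilon_{k'}^2\mathbb{E}\{\|\mathbf{v}_k\|^2\}$ and $\mathbb{E}\{|\mathbf{g}_{k'}^T\mathbf{F}_t\mathbf{v}_j|^2\}\to\varepsilon_{k'}^2\mathbb{E}\{\|\mathbf{v}_j\|^2\}$ for $j\neq k$. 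Summing and again using invariance of $\|\cdot\|_F$, the bracketed interference-plus-error term in \eqref{eq:received2} collapses to $\varepsilon_{k'}^2\sum_{j=1}^{2K}\mathbb{E}\{\|\mathbf{v}_j\|^2\}=\varepsilon_{k'}^2\mathbb{E}\{\|\mathbf{W}_r\|_F^2\}\to\varepsilon_{k'}^2\frac{4}{\pi N}\sum_{j=1}^{2K}\frac{1}{\sigma_j^2}$. Substituting this and $1/\mu^2$ into \eqref{eq:received2} and factoring out $\frac{4}{\pi N}\sum_{j}1/\sigma_j^2$ gives $\gamma_{R\rightarrow k'}\to\frac{P_r\pi N}{4(1+P_r\varepsilon_{k'}^2)\sum_{j=1}^{2K}1/\sigma_j^2}$; since this grows linearly in $N$, the elementary limit $\lim_{a\to\infty}\log_2(1+a)/\log_2 a=1$ noted earlier closes the claim.

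The hard part, just as in \emph{Lemma}~\ref{lemma_uplink}, is making the two asymptotic substitutions above rigorous inside the expectations: $\mathbf{F}_t^H\mathbf{F}_t\to\mathbf{I}_{2K}$, and $\mathbf{W}_r=\mathbf{H}_{eq}^{-1}\to\mathbf{D}^{-1}$ (the latter entering through $\mathbf{v}_j=\mathbf{w}_{j'}$ and hence through \eqref{eq:wkexpression}). These rely on \emph{Propositions}~\ref{matrix_inversion} and \ref{taylor_expansion} of Appendix~C, whose Neumann-type expansion \eqref{eq:inversion} converges in probability precisely when $N>\lfloor 4L^2/\pi\rfloor$, plus the CLT/Gaussian-tail control of $\mathbb{E}\{\|\mathbf{w}_j\|^2\}$ underpinning \eqref{eq:wkexpression}. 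A secondary care point is that $1/\mu^2$ enters \eqref{eq:received2} directly as the noise floor, so its limit must be secured before, not after, assembling the SINR; the rest is the routine bookkeeping already carried out in the uplink proof.
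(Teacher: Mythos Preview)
Your proposal is correct and follows essentially the same route as the paper: reduce \eqref{eq:received2} via the ZF identity $\hat{\mathbf{G}}^T\mathbf{F}_t\mathbf{W}_t=\mathbf{P}$ and the error decomposition to obtain \eqref{eq:received2reduce}, use $\mathbf{v}_k=\mathbf{w}_{k'}$ together with \eqref{eq:wkexpression} to handle $\sum_j\mathbb{E}\{\|\mathbf{v}_j\|^2\}$, and evaluate $1/\mu^2$ through $\mathbf{F}_t^H\mathbf{F}_t\xrightarrow{a.s.}\mathbf{I}_{2K}$ and $\|\mathbf{W}_t\|_F=\|\mathbf{W}_r\|_F$. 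The only differences are cosmetic ordering (you do $\mu$ first, the paper does it second) and your explicit invocation of $\lim_{a\to\infty}\log_2(1+a)/\log_2 a=1$ to convert SINR asymptotics into the stated rate ratio, which the paper leaves implicit.
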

\begin{proof}
Following a similar method for deriving \eqref{eq:wk}, after some mathematical manipulations, \eqref{eq:received2} reduces to
\begin{equation}\label{eq:received2reduce}
\tilde{R}_{R\rightarrow k'}=\log_2\left(1+\frac{P_r}{\frac{1}{\mu^2}+P_r\varepsilon_{k'}^2\sum_{j=1}^{2K}\mathbb{E}\left\{\|\mathbf{v}_{j}^T\|^2\right\}}\right).
\end{equation}
Because $\mathbf{W}_t= \mathbf{W}_r^T \mathbf{P}$ and considering the function of permutation matrix $\mathbf{P}$, it gives $\mathbf{v}_{k}=\mathbf{w}_{k'}$, where $(k,k')$ is a communication pair.
Thus
\begin{equation}\label{eq:mu}
\lim\limits_{N\rightarrow\infty}\frac{\mathbb{E}\left\{\|\mathbf{v}_{j}^T\|^2\right\}}{4/\sigma_{j'}^2\pi N}=\lim\limits_{N\rightarrow\infty}\frac{\mathbb{E}\left\{\|\mathbf{w}_{j'}^T\|^2\right\}}{4/\sigma_{j'}^2\pi N}=1,
\end{equation}
where the last equality follows from \eqref{eq:wkexpression}. Next, we derive the expression for $\mu$. According to \eqref{eq:relaytransmit} and the transmit power constraint, it follows
\begin{align}
\frac{1}{\mu^2}&=\mathbb{E}\{\text{Tr}[\mathbf{F}_t\mathbf{W}_t\mathbf{x}[t-d]{\mathbf{x}[t-d]}^H{\mathbf{W}_t}^H{\mathbf{F}_t}^H]\}\nonumber\\
&=\mathbb{E}\{\text{Tr}[{\mathbf{W}_t}^H{\mathbf{F}_t}^H\mathbf{F}_t\mathbf{W}_t]\}.
\end{align}
Similar to $\mathbf{F}_r$, due to the law of large numbers, we have ${\mathbf{F}_t}^H\mathbf{F}_t\xrightarrow{a.s.}\mathbf{I}_{2K}$ as $N\rightarrow\infty$.
Therefore
\begin{align}
\lim\limits_{N\rightarrow\infty}\frac{1/\mu^2}{\mathbb{E}\{\text{Tr}[{\mathbf{W}_t}^H\mathbf{W}_t]\}}=1&=\lim\limits_{N\rightarrow\infty}\frac{1/\mu^2}{\mathbb{E}\{\|\mathbf{W}_r\|_F^2\}}\nonumber\\
&=\lim\limits_{N\rightarrow\infty}\frac{1/\mu^2}{\sum_{j=1}^{2K}\|\mathbf{w}_j\|^2}.\nonumber
\end{align}
On the other hand, $\lim\limits_{N\rightarrow\infty}\frac{\sum_{j=1}^{2K}\|\mathbf{w}_j\|^2}{\sum_{j=1}^{2K}4/\sigma_{j}^2\pi N}=1.$ Therefore,
\begin{equation}\label{eq:vj}
\lim\limits_{N\rightarrow\infty}\frac{1/\mu^2}{\sum_{j=1}^{2K}4/\sigma_{j}^2\pi N}=1.
\end{equation}
Substituting \eqref{eq:mu} and \eqref{eq:vj} into \eqref{eq:received2reduce}, we finally arrive at \emph{Lemma}~\ref{lemma_downlink}.
\end{proof}

\section{Propositions \ref{matrix_inversion} and \ref{taylor_expansion}}
\begin{mypro}\label{matrix_inversion}
If the number of relay antennas $N$ and the RF chain number $L$ satisfy $N>\left\lfloor 4L^2/\pi \right\rfloor$, $\mathbf{D}(\mathbf{D}+\mathbf{A})^{-1}$ converges to $\mathbf{I}_{2K}$ in probability as $N\rightarrow\infty$, where $\mathbf{D}$ and $\mathbf{A}$ are defined in \eqref{eq:Heq_rewrite}.
\end{mypro}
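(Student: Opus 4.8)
The plan is to reduce the claim to a Neumann-series estimate, the point being that the diagonal part $\mathbf{D}$ grows like $\sqrt{N}$ while the off-diagonal part $\mathbf{A}$ stays $O(1)$. First I would observe that $\mathbf{D}$ is invertible with probability tending to one: by \eqref{eq:diagonal_almostsurely} each diagonal entry obeys $h_{k,k}/\sqrt{N}\xrightarrow{a.s.}\sigma_k\sqrt{\pi}/2>0$, so on that event I may factor
\[
\mathbf{D}+\mathbf{A}=(\mathbf{I}_{2K}+\mathbf{A}\mathbf{D}^{-1})\mathbf{D},
\]
which inverts to the identity $\mathbf{D}(\mathbf{D}+\mathbf{A})^{-1}=(\mathbf{I}_{2K}+\mathbf{A}\mathbf{D}^{-1})^{-1}$. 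It therefore suffices to show that $\mathbf{A}\mathbf{D}^{-1}\to\mathbf{0}$ in probability (in operator norm, which is dominated by the Frobenius norm) and that $\|\mathbf{A}\mathbf{D}^{-1}\|<1$ with probability tending to one; for then $(\mathbf{I}_{2K}+\mathbf{A}\mathbf{D}^{-1})^{-1}=\sum_{n\ge 0}(-\mathbf{A}\mathbf{D}^{-1})^{n}$ and $\|(\mathbf{I}_{2K}+\mathbf{A}\mathbf{D}^{-1})^{-1}-\mathbf{I}_{2K}\|\le\|\mathbf{A}\mathbf{D}^{-1}\|/(1-\|\mathbf{A}\mathbf{D}^{-1}\|)\to 0$.

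Next I would quantify $\mathbf{A}\mathbf{D}^{-1}$. Since $\mathbf{D}$ is diagonal, its $(k,j)$-entry is $h_{k,j}/h_{j,j}$ for $k\ne j$ and $0$ on the diagonal; the numerators are $O(1)$ by \eqref{eq:offdiagonal} (second moment $\sigma_j^2$) while the denominators are $\Theta(\sqrt{N})$ by \eqref{eq:diagonal}, so each entry is $O(1/\sqrt{N})$ and, $\mathbf{A}\mathbf{D}^{-1}$ being a fixed $2K\times 2K$ matrix, entrywise convergence already yields $\|\mathbf{A}\mathbf{D}^{-1}\|\to 0$ for fixed $L$. To make the estimate uniform in $L$ — and to expose where the hypothesis $N>\lfloor 4L^2/\pi\rfloor$ enters — I would instead bound $\mathbb{E}\|\mathbf{A}\mathbf{D}^{-1}\|_F^2=\sum_{k\ne j}\mathbb{E}\{|h_{k,j}|^2/|h_{j,j}|^2\}$: there are $2K(2K-1)=L(L-1)$ off-diagonal terms, and pairing $\mathbb{E}|h_{k,j}|^2=\sigma_j^2$ against $\mathbb{E}\{h_{j,j}^2\}=\sigma_j^2\pi N/4+(1-\pi/4)\sigma_j^2$ each term behaves like $4/(\pi N)$, so $\mathbb{E}\|\mathbf{A}\mathbf{D}^{-1}\|_F^2$ has leading term $4L(L-1)/(\pi N)$, which falls below $1$ precisely in the regime $N>\lfloor 4L^2/\pi\rfloor$. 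Markov's inequality then gives $\|\mathbf{A}\mathbf{D}^{-1}\|_F<1$ with probability tending to one — exactly the hypothesis needed to legitimize the matrix expansion of \emph{Proposition}~\ref{taylor_expansion} — and letting $N\to\infty$ sends the bound to $0$, so the tail estimate of the first paragraph concludes the proof.

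I expect the crux to be this last tail/threshold point: although each entry of $\mathbf{A}\mathbf{D}^{-1}$ is individually $O(1/\sqrt{N})$, there are $\Theta(L^2)$ of them, so controlling their aggregate contribution to the norm is possible only once $N$ outgrows $L^2$, which is the origin of the constant $4L^2/\pi$. A secondary technical nuisance is that $h_{j,j}$ and $h_{k,j}$ both involve $\hat{\mathbf{g}}_j$ and hence are not independent; I would resolve this by conditioning on $\hat{\mathbf{g}}_j$, whereupon $h_{j,j}$ is deterministic and, by the law of large numbers, concentrates at $\sigma_j\sqrt{\pi N}/2$, while $\mathbb{E}\{|h_{k,j}|^2\mid\hat{\mathbf{g}}_j\}=\frac{1}{N}\sum_i|\hat{g}_{i,j}|^2\to\sigma_j^2$ (the cross terms vanish because the phases $e^{-j\arg(\hat{g}_{i,k})}$ are independent and unit modulus), giving $\mathbb{E}\{|h_{k,j}|^2/|h_{j,j}|^2\}=4/(\pi N)+o(1/N)$ as used above. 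Once $\|\mathbf{A}\mathbf{D}^{-1}\|\xrightarrow{p}0$ is established, $\mathbf{D}(\mathbf{D}+\mathbf{A})^{-1}\xrightarrow{p}\mathbf{I}_{2K}$ follows immediately.
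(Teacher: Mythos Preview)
Your approach is essentially the paper's: factor out $\mathbf{D}$, expand $(\mathbf{I}_{2K}+\mathbf{A}\mathbf{D}^{-1})^{-1}$ as a Neumann series, control $\|\mathbf{A}\mathbf{D}^{-1}\|_F$, and read off the threshold $N>\lfloor 4L^2/\pi\rfloor$ from the leading term $4L(L-1)/(\pi N)$ before passing to the geometric tail bound $\|\mathbf{A}\mathbf{D}^{-1}\|_F/(1-\|\mathbf{A}\mathbf{D}^{-1}\|_F)$. The one substantive difference is tactical: the paper establishes concentration of $X=\|\mathbf{A}\mathbf{D}^{-1}\|_F^2$ via Chebyshev, which forces a careful case analysis of all the covariances $\text{Cov}\{|h_{j_1,k_1}|^2,|h_{j_2,k_2}|^2\}$ (three cases, conditioning on $\hat{\mathbf{g}}_k$ or $\mathbf{f}_j$ as appropriate) to get $N^2\mathbb{V}\{X\}\to 32K(2K-1)/\pi^2$; you instead invoke Markov on $\mathbb{E}\{X\}$ directly, which is lighter and perfectly adequate for the convergence-in-probability conclusion, at the cost of a slightly weaker quantitative bound on $\Pr(X\ge 1)$. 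One small slip: your parenthetical pointer to ``the matrix expansion of \emph{Proposition}~\ref{taylor_expansion}'' is a mis-reference --- that proposition is the scalar estimate $\mathbb{E}\{a^2/(y+a)^2\}\to 1$, not the Neumann expansion, which lives in the present proposition.
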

\begin{proof}
Since $\mathbf{D}$ is a diagonal matrix with its diagonal elements as $h_{k,k}\sim\mathcal{N}\bigg(\frac{\sigma_k \sqrt{\pi N}}{2},(1-\frac{\pi}{4})\sigma_k^2\bigg)$ when $N\rightarrow \infty$, it is common and reasonable to assume that $\mathbf{D}$ is invertible and the probability of non-invertible $\mathbf{D}$ is in principle zero. Thus we can decompose the matrix $(\mathbf{D}+\mathbf{A})^{-1}$ as follows
\begin{align}\label{eq:inversion}
(\mathbf{D}+\mathbf{A})^{-1}&=\mathbf{D}^{-1}(\mathbf{I}+\mathbf{A}\mathbf{D}^{-1})^{-1}\nonumber\\
&=\mathbf{D}^{-1}+\sum_{k=1}^{\infty}(-1)^k\mathbf{D}^{-1}(\mathbf{A}\mathbf{D}^{-1})^k,
\end{align}
where the last equality uses the well-known matrix decomposition
$
(\mathbf{I}-\mathbf{B})^{-1}=\mathbf{I}+\sum_{k=1}^{\infty}\mathbf{B}^k$, in which the condition for convergence is $\|\mathbf{B}\|_F^2<1$.
Accordingly, it is necessary to check the condition of $X\triangleq\|\mathbf{A}\mathbf{D}^{-1}\|_F^2<1$ in order to guarantee the convergence of series summation in \eqref{eq:inversion}. In the following, we will prove that $\lim\limits_{N\rightarrow\infty}\Pr(X<1)=1$ and hence the convergence of \eqref{eq:inversion} holds for large $N$ in probability.

Recalling that $\mathbf{H}_{eq}=\mathbf{D}+$$\mathbf{A}$ where $\mathbf{D}$ and $\mathbf{A}$ are respectively the diagonal and off-diagonal components of $\mathbf{H}_{eq}$. We have
\begin{equation}\label{eq:convergence}
X=\|\mathbf{A}\mathbf{D}^{-1}\|_F^2
=\sum_{k=1}^{2K}\sum_{j\neq k}^{2K}\frac{|h_{j,k}|^2}{h_{k,k}^2},
\end{equation}
where $h_{i,j}$ is the $(i,j)$-th element of $\mathbf{H}_{eq}$ defined in \eqref{eq:equivalentchannel}.

From \eqref{eq:convergence} and incorporating \eqref{eq:diagonal_almostsurely}, it yields
\begin{equation}\label{eq:norm_almost_sure}
NX=\sum_{k=1}^{2K}\sum_{j\neq k}^{2K}\frac{|h_{j,k}|^2}{(h_{k,k}/\sqrt{N})^2}\xrightarrow{a.s.}\frac{4}{\pi}\sum_{k=1}^{2K}\sum_{j\neq k}^{2K}\frac{|h_{j,k}|^2}{\sigma_k^2}.
\end{equation}
Subsequently, we use \eqref{eq:norm_almost_sure} to evaluate $NX$ in the limit of an infinite $N$.
Consequently,
\begin{equation}
\lim\limits_{N\rightarrow\infty}N\mathbb{E}\{X\}=\frac{4}{\pi}\sum_{k=1}^{2K}\sum_{j\neq k}^{2K}\frac{\mathbb{E}\{|h_{j,k}|^2\}}{\sigma_k^2}=\frac{8K(2K-1)}{\pi}.
\end{equation}
From \cite[Theorem 11.4.3]{Larsen2006statistics}, the variance of a linear combination follows
{\small
\begin{equation}\label{eq:sum_variance}
\mathbb{V}\bigg\{\sum\limits_{m=1}^{M}a_m X_m\bigg\}=\sum\limits_{m=1}^{M}a_m^2\mathbb{V}\{X_m\}+\sum\limits_{n\neq m}^{M}a_n a_m \text{Cov}\{X_n,X_m\},
\end{equation}
}
where $\text{Cov}\{X_n,X_m\}=\mathbb{E}\{X_n X_m\}-\mathbb{E}\{X_n\}\mathbb{E}\{X_m\}$ is the covariance of $X_n$ and $X_m$.
According to \eqref{eq:sum_variance}, we evaluate the variance of \eqref{eq:norm_almost_sure}
\begin{align}\label{eq:variance_expansion}
\lim\limits_{N\rightarrow\infty}N^2\mathbb{V}\{X\}&=\frac{16}{\pi^2}\sum\limits_{k=1}^{2K}\sum\limits_{j\neq k}^{2K}\frac{\mathbb{V}\{|h_{j,k}|^2\}}{\sigma_k^4}\nonumber\\
&+\frac{16}{\pi^2}\sum\limits_{j_1\neq j_2\neq k_1\neq k_2}^{2K}\frac{\text{Cov}\{|h_{j_1,k_1}|^2,|h_{j_2,k_2}|^2\}}{\sigma_{k_1}^2\sigma_{k_2}^2}\nonumber\\
&+\frac{16}{\pi^2}\sum\limits_{j_1\neq j_2\neq k}^{2K}\frac{\text{Cov}\{|h_{j1,k}|^2,|h_{j2,k}|^2\}}{\sigma_k^4}\nonumber\\
&+\frac{16}{\pi^2}\sum\limits_{j\neq k_1\neq k_2}^{2K}\frac{\text{Cov}\{|h_{j,k1}|^2,|h_{j,k2}|^2\}}{\sigma_{k_1}^2\sigma_{k_2}^2},
\end{align}
where a sequence of ``$\neq$'' under the summation operation means that any two of the indexes are not equal.
Since $h_{j,k}\sim\mathcal{CN}\left(0,\sigma_k^2\right)$, we have $$\mathbb{V}\{|h_{j,k}|^2\}=\mathbb{E}\{|h_{j,k}|^4\}-\mathbb{E}^2\{|h_{j,k}|^2\}=\sigma_k^4.$$Next, the covariance terms are derived separately for three cases.

\noindent 1) For $j_1\neq j_2\neq k_1\neq k_2$, it is direct to know that $h_{j_1,k_1}$ and $h_{j_2,k_2}$ are independent, thus $\text{Cov}\{|h_{j_1,k_1}|^2,|h_{j_2,k_2}|^2\}=0$.\\
\noindent 2) For $j_1\neq j_2\neq k$, given $\hat{\mathbf{g}}_k$ fixed, $h_{j_1,k}$ and $h_{j_1,k}$ are independent, thus
\begin{equation}
\mathbb{E}\{|h_{j_1,k}|^2|h_{j_2,k}|^2|\hat{\mathbf{g}}_k\}=\mathbb{E}\{|h_{j_1,k}|^2|\hat{\mathbf{g}}_k\}\mathbb{E}\{|h_{j_2,k}|^2|\hat{\mathbf{g}}_k\}.
\end{equation}
Meanwhile,
\begin{align}
\mathbb{E}\{|h_{j_1,k}|^2|\hat{\mathbf{g}}_k\}=\mathbb{E}\{\mathbf{f}^T_{j_1}\hat{\mathbf{g}}_{k}\hat{\mathbf{g}}^H_{k}\mathbf{f}_{j_1}^*|\hat{\mathbf{g}}_k\}&=\text{Tr}(\hat{\mathbf{g}}_k\hat{\mathbf{g}}_k^H\mathbb{E}\{\mathbf{f}_{j_1}^*\mathbf{f}^T_{j_1}\}|\hat{\mathbf{g}}_k)\nonumber\\
&=\|\hat{\mathbf{g}}_k\|^2/N.
\end{align}
Similarly, $\mathbb{E}\{|h_{j_2,k}|^2|\hat{\mathbf{g}}_k\}=\|\hat{\mathbf{g}}_k\|^2/N$, which yields $\mathbb{E}\{|h_{j_1,k}|^2|h_{j_2,k}|^2|\hat{\mathbf{g}}_k\}=\|\hat{\mathbf{g}}_k\|^4/N^2$. Further taking expectations over $\hat{\mathbf{g}}_k$ yields
\begin{equation}
\mathbb{E}\{|h_{j_1,k}|^2|h_{j_2,k}|^2\}=\mathbb{E}\{\|\hat{\mathbf{g}}_k\|^4\}/N^2=(1+\frac{1}{N})\sigma_k^4,
\end{equation}
where the last equality is due to the fact that $\|\hat{\mathbf{g}}_k\|^2$ follows a Gamma distribution $\Gamma(N,\sigma_k^2)$, and $\mathbb{E}\{\|\hat{\mathbf{g}}_k\|^4\}=N(N+1)\sigma_k^4$.
Consequently,
\begin{align}
&\lim\limits_{N\rightarrow\infty}\text{Cov}\{|h_{j_1,k}|^2,|h_{j_2,k}|^2\}\nonumber\\
=&\lim\limits_{N\rightarrow\infty}\mathbb{E}\{|h_{j_1,k}|^2|h_{j_2,k}|^2\}-\mathbb{E}\{|h_{j_1,k}|^2\}\mathbb{E}\{|h_{j_2,k}|^2\}\nonumber\\
=&\lim\limits_{N\rightarrow\infty}\frac{\sigma_k^4}{N}=0.
\end{align}
3) For $j\neq k_1\neq k_2$, by first deriving the conditional covariance with given $\mathbf{f}_j$ and then taking expectations over $\mathbf{f}_j$, we can obtain
\begin{equation}
\lim\limits_{N\rightarrow\infty}\text{Cov}\{|h_{j,k1}|^2,|h_{j,k2}|\}=0.
\end{equation}
Based on the above results, \eqref{eq:variance_expansion} reduces to
\begin{equation}
\lim\limits_{N\rightarrow\infty}N^2\mathbb{V}\{X\}=\frac{32K(2K-1)}{\pi^2}
\end{equation}
Applying the Chebyshev inequality, we obtain for any $0<\epsilon<1$,
$$\lim\limits_{N\rightarrow\infty}\Pr(|X-\mathbb{E}\{X\}|\leq \epsilon)\geq \lim\limits_{N\rightarrow\infty}1-\frac{\mathbb{V}\{X\}}{\epsilon^2}=1,$$
which implies that $X=\|\mathbf{A}\mathbf{D}^{-1}\|_F^2$ converges to $\mathbb{E}\{X\}=\frac{8K(2K-1)}{\pi N}$ in probability for large $N$.

Now that by incorporating the convergence condition of $\|\mathbf{A}\mathbf{D}^{-1}\|_F^2<1$ for \eqref{eq:inversion}, we can conclude that the condition $\|\mathbf{A}\mathbf{D}^{-1}\|_F^2<1$ equivalently becomes $\mathbb{E}\{X\}<1$ for large $N$, which yields
\begin{equation}\label{eq:N_K_relationship}
N>\left\lfloor\frac{8K(2K-1)}{\pi}\right\rfloor.
\end{equation}
Define $\delta\triangleq\|\mathbf{D}(\mathbf{D}+\mathbf{A})^{-1}-\mathbf{I}_{2K}\|_F$, with the convergence proved, we can apply the expansion in \eqref{eq:inversion} and then $\delta=\|\sum_{k=1}^{\infty}(-1)^k(\mathbf{A}\mathbf{D}^{-1})^k\|_F$. Further, by successively using the triangle inequality of Frobenius norm and the fact that $\|\mathbf{XY}\|_F\leq\|\mathbf{X}\|_F\|\mathbf{Y}\|_F$ \cite{hornmatrix}, it follows
\begin{align}
\lim\limits_{N\rightarrow\infty}\delta&\leq\lim\limits_{N\rightarrow\infty}\sum_{k=1}^{\infty}\|(\mathbf{A}\mathbf{D}^{-1})^k\|_F\nonumber\\
&\leq\lim\limits_{N\rightarrow\infty}\sum_{k=1}^{\infty}\|\mathbf{A}\mathbf{D}^{-1}\|_F^k\nonumber\\
&= \lim\limits_{N\rightarrow\infty}\frac{\|\mathbf{A}\mathbf{D}^{-1}\|_F}{1-\|\mathbf{A}\mathbf{D}^{-1}\|_F}.
\end{align}
When $N\rightarrow\infty$, $\|\mathbf{A}\mathbf{D}^{-1}\|_F^2$ converges to $\frac{8K(2K-1)}{\pi N}$ in probability and hence $\delta$ converges to zero. Therefore, we conclude that $\mathbf{D}(\mathbf{D}+\mathbf{A})^{-1}\rightarrow\mathbf{I}_{2K}$ in probability.
\end{proof}
\begin{mypro}\label{taylor_expansion}
For a random variable $y\sim\mathcal{N}(0,\sigma^2)$, the following equality holds:
\begin{equation}\label{eq:proposition2}
\lim\limits_{a\rightarrow\infty} \mathbb{E} \left\{ \frac{a^2}{(y+a)^2}\right\}=1.
\end{equation}
\end{mypro}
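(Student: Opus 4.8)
The plan is to reduce the claim to a Gaussian moment computation through a Taylor expansion of the integrand, and then to justify the interchange of the limit and the expectation; the latter is where the real difficulty lies. First I would normalize by writing
\begin{equation}
\frac{a^2}{(y+a)^2}=\frac{1}{(1+y/a)^2}.
\end{equation}
Since $y\sim\mathcal{N}(0,\sigma^2)$ has variance fixed while $a\to\infty$, the deterministic scaling gives $y/a\to 0$ almost surely, so the integrand tends to $1$ pointwise. The task is therefore to promote this almost-sure limit to convergence of the expectation.

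For the value of the limit I would expand, for $|y/a|<1$,
\begin{equation}
\frac{1}{(1+y/a)^2}=\sum_{k=0}^{\infty}(-1)^k(k+1)\left(\frac{y}{a}\right)^k,
\end{equation}
and integrate term by term against the Gaussian density. The odd-order terms vanish by symmetry of $y$, while $\mathbb{E}\{y^{2m}\}=(2m-1)!!\,\sigma^{2m}$, so the leading terms read
\begin{equation}
\mathbb{E}\left\{\frac{a^2}{(y+a)^2}\right\}\sim 1+3\left(\frac{\sigma}{a}\right)^2+\cdots=1+O\!\left(\frac{\sigma^2}{a^2}\right),
\end{equation}
which tends to $1$ because $\sigma$ is held fixed. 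I would stress that this is only an asymptotic identity: the moment series $\sum_m(2m+1)(2m-1)!!(\sigma/a)^{2m}$ actually diverges for every finite $a$, which is the analytic signature of the non-integrable singularity of the integrand at $y=-a$ and warns that the term-by-term interchange is not literally valid.

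To make the convergence rigorous I would split the line at $\{|y|\le a/2\}$. On this bulk event $(y+a)^2\ge(a/2)^2$, so $\left|\frac{a^2}{(y+a)^2}-1\right|=\frac{|2ay+y^2|}{(y+a)^2}\le\frac{4(2a|y|+y^2)}{a^2}$; integrating and using $\mathbb{E}|y|=\sigma\sqrt{2/\pi}$ and $\mathbb{E}\{y^2\}=\sigma^2$ bounds the bulk deviation by $O(1/a)$, so the bulk contribution converges to $1$. The step I expect to be the main obstacle is the complementary region $\{|y|>a/2\}$: although it carries Gaussian mass at most $e^{-a^2/(8\sigma^2)}$, it contains the singularity at $y=-a$, where $a^2/(y+a)^2$ is unbounded, so no uniform dominating function exists and a naive tail bound fails. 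My plan for this piece is to excise a shrinking neighborhood of $y=-a$ and bound the remaining tail by the product of the (uniformly controlled) integrand and the super-exponentially small tail mass, giving an $o(1)$ contribution in the regime $a\to\infty$ with $\sigma$ fixed; combined with the bulk estimate this yields the claimed limit.
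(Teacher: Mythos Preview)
Your heuristic via the Taylor expansion is exactly the paper's argument: the paper applies Chebyshev to assert $\Pr(|y/a|<1)\to 1$, expands $(1+y/a)^{-2}$ in powers of $y/a$, integrates term by term using Gaussian moments, and stops there. You go further by correctly observing that the resulting moment series $\sum_m (2m+1)(2m-1)!!(\sigma/a)^{2m}$ diverges for every finite $a$ --- a point the paper does not raise.

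However, your proposed rigorous repair has a genuine gap, and in fact the divergence you detected is fatal. It is not merely a failure of term-by-term integration: for every finite $a>0$ one has $\mathbb{E}\{a^2/(y+a)^2\}=+\infty$, since the Gaussian density is strictly positive at $y=-a$ while $\int_{-\epsilon}^{\epsilon}u^{-2}\,du=+\infty$. The quantity inside the limit is therefore identically $+\infty$, and the statement as written is false. Your plan to ``excise a shrinking neighborhood of $y=-a$ and bound the remaining tail'' cannot succeed: the excised neighborhood already contributes $+\infty$ regardless of its width, so no decomposition recovers a finite limit, and the super-exponentially small tail mass is of no help when the integrand on it is non-integrable.

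In the paper's application, the underlying variable $h_{k,k}$ is a (scaled) sum of Rayleigh variables and hence almost surely positive, so $\mathbb{E}\{1/h_{k,k}^2\}$ is genuinely finite; the singularity is an artifact of replacing $h_{k,k}$ by its Gaussian CLT surrogate. A correct argument would work directly with the positive pre-limit variable, or with a Gaussian truncated to $\{y>-a/2\}$ together with a separate bound showing the complementary event is negligible for the \emph{original} variable. Neither the paper's proof nor your plan as stated does this.
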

\begin{proof}
For $y\sim\mathcal{N}(0,\sigma^2)$, the Chebyshev inequality gives $
\Pr\left(\left|\frac{y}{a}\right|<1\right)\geq 1-\frac{\sigma^2}{a^2}$,
which further yields $\lim\limits_{a\rightarrow\infty}\Pr\left(\left|\frac{y}{a}\right|<1\right)=1$.
Therefore, we can have the following expansion
{\small
\begin{equation}\label{eq:expansion}
\lim\limits_{a\rightarrow\infty}\frac{a^2}{(y+a)^2}=\lim\limits_{a\rightarrow\infty}\frac{1}{(1+\frac{y}{a})^2}=\lim\limits_{a\rightarrow\infty}\sum\limits_{k=0}^{\infty}(-1)^k(k+1)\left(\frac{y}{a}\right)^k,
\end{equation}
}
which uses the Taylor expansion $$\frac{1}{(1+x)^2}=\sum\limits_{k=0}^{\infty}(-1)^k(k+1)x^k,~|x|<1.$$
By exploiting the results on central moments of Gaussian random variables, i.e.,
\begin{align}
\mathbb{E}\{y^{2k-1}\}=0, \ \mathbb{E}\{y^{2k}\}=\sigma^{2k}(2k-1)!!,~k=1,2,\cdots,
\end{align}
the expectation of \eqref{eq:expansion} is calculated as
$$\lim\limits_{a\rightarrow\infty}\mathbb{E}\left\{\frac{a^2}{(y+a)^2}\right\}
=1+\lim\limits_{a\rightarrow\infty}\sum\limits_{k=1}^{\infty}\frac{\sigma^{2k}(2k+1)!!}{a^{2k}}=1,$$
which gives the desired result.
\end{proof}
\section{Lemma \ref{lemma_power}}\label{proof_lemma_power}
\begin{mylemma}\label{lemma_power}
Given $K$ and $N$ fixed, the optimal EE is always achieved at $2KP_s=P_r$ for $P_s>0$ and $P_r>0$.
\end{mylemma}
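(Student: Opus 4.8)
The plan is to exploit the structural form of $\eta_{EE}$ in \eqref{eq:EE1} rather than attempt a stationarity argument, since the $\min\{\cdot,\cdot\}$ inside $\Delta$ renders $\eta_{EE}$ non-differentiable along the locus $2KP_s=P_r$, so a Lagrangian/KKT treatment in $(P_s,P_r)$ would be awkward. Two observations drive the argument. First, the numerator of \eqref{eq:EE1} depends on $(P_s,P_r)$ only through $\Delta=\min\{f(2KP_s),f(P_r)\}$, where $f(x)\triangleq x/(1+x\varepsilon_0^2)$ is strictly increasing on $[0,\infty)$ (even for $\varepsilon_0^2=0$), and $\log_2\!\big(1+\tfrac{\pi N\sigma_0^2}{8K}\Delta\big)$ is strictly increasing in $\Delta$. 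Second, the denominator $P_{sum}=\frac{1}{2\kappa}(2KP_s+P_r)+2KP_0+P_{const}+2KNP_{APS}$ is affine and strictly increasing in each of $P_s$ and $P_r$ separately.

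Given these, I would take an arbitrary feasible pair $(P_s,P_r)$ with $2KP_s\neq P_r$ and show it is strictly dominated by a pair lying on $2KP_s=P_r$. If $2KP_s<P_r$, then by monotonicity of $f$ the minimum defining $\Delta$ is attained by its first argument, $\Delta=f(2KP_s)$; hence reducing $P_r$ down to $\tilde P_r=2KP_s$ leaves $\Delta$ — and therefore the numerator — unchanged, while strictly decreasing the denominator, so $\eta_{EE}$ strictly increases, and the new operating point satisfies $2KP_s=\tilde P_r$. If instead $2KP_s>P_r$, the symmetric step applies: $\Delta=f(P_r)$, so reducing $P_s$ down to $\tilde P_s=P_r/(2K)$ keeps the numerator fixed and strictly lowers the denominator. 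In either case we obtain a strictly larger $\eta_{EE}$ at a point on the line.

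It then follows that any optimal choice $(P_s^*,P_r^*)$ must obey $2KP_s^*=P_r^*$ — equivalently, maximizing $\eta_{EE}$ over the open quadrant $\{P_s>0,\,P_r>0\}$ is equivalent to maximizing it along the ray $P_r=2KP_s$, i.e.\ a one-dimensional problem (cf.\ the proof of \emph{Theorem}~\ref{optimal_power}). Existence of a maximizer is not needed for the lemma as stated, but follows from continuity of $\eta_{EE}$ together with $\eta_{EE}\to 0$ as $P_s\to 0$ or $P_s\to\infty$ with $P_r$ fixed, and symmetrically in $P_r$. I expect the only delicate point to be exactly the non-smoothness at $2KP_s=P_r$, which is why I route the proof through the monotone ``shrink the larger of $2KP_s$ and $P_r$'' reduction; once that reduction is in place the remaining steps are immediate from the two monotonicity facts above and require no computation.
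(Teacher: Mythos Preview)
Your proposal is correct and follows essentially the same approach as the paper: partition the feasible region according to the sign of $2KP_s-P_r$, and in each off-diagonal case reduce the larger of $2KP_s$ and $P_r$ to match the smaller, which leaves the numerator (through $\Delta$) unchanged while strictly decreasing the denominator. Your explicit identification of $f(x)=x/(1+x\varepsilon_0^2)$ as strictly increasing is a slight sharpening of the paper's presentation, but the argument is the same.
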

\begin{proof}
Consider $\eta_{EE}$ as a function of variables $P_s$ and $P_r$, denoted as $\eta_{EE}(P_s,P_r)$. The feasible region of $(P_s,P_r)$ can be divided into three subregions: $\mathcal{S}_0=\{(P_s,P_r)|2KP_s=P_r>0\}$, $\mathcal{S}_1=\{(P_s,P_r)|0<2KP_s<P_r\}$ and $\mathcal{S}_2=\{(P_s,P_r)|2KP_s>P_r>0\}$.

\noindent 1)~~For any $(P_s,P_r)\in \mathcal{S}_1$, i.e., $2KP_s<P_r$ which gives $\Delta=\frac{2KP_s}{1+2KP_s\varepsilon_0^2}$ in \eqref{eq:EE1}, we have
{\small
\begin{align}\label{eq:S1}
 ~~\eta_{EE}(P_s,P_r)=\frac{K\log_2\bigg(1+\frac{\pi N\sigma_0^2}{8K}\Delta\bigg)}{\frac{1}{2\kappa}(2KP_s+P_r)+2KP_0+P_{const}+2KNP_{APS}}.
\end{align}
}
Compare \eqref{eq:S1} with the special case of $P_r=2KP_s$, i.e.,
\begin{align}
&\eta_{EE}(P_s,2KP_s)\nonumber\\
=&\frac{K\log_2\bigg(1+\frac{\pi N\sigma_0^2}{8K}\Delta\bigg)}{\frac{1}{2\kappa}(2KP_s+2KP_s)+2KP_0+P_{const}+2KNP_{APS}}.
\end{align}
Due to $2KP_s<P_r$, it directly follows $\eta_{EE}(P_s,2KP_s)>\eta_{EE}(P_s,P_r)$. Therefore, for any $(P_s,P_r)\in \mathcal{S}_1$, there exists $(P_s,2KP_s)\in \mathcal{S}_0$ such that $\eta_{EE}(P_s,2KP_s)>\eta_{EE}(P_s,P_r)$.

\noindent 2)~~For any $(P_s,P_r)\in \mathcal{S}_2$, i.e., $2KP_s>P_r$, following similar procedures as in 1), it is easily checked that for any $2KP_s>P_r$, it holds $$\eta_{EE}(P_r/{2K},P_r)>\eta_{EE}(P_s,P_r).$$
Thus, for any point $(P_s,P_r)\in \mathcal{S}_2$, there exists $(P_r/{2K},P_r)\in \mathcal{S}_0$ yielding $\eta_{EE}(P_r/{2K},P_r)>\eta_{EE}(P_s,P_r)$. Consequently, based on 1) and 2), we conclude that the maximal $\eta_{EE}$ must be achieved in $\mathcal{S}_0$, which implies that the optimal $(P_s,P_r)$ which gives the maximal EE always satisfies $P_r=2KP_s$.

In fact, the condition of $2KP_s=P_r$ is also the optimal power allocation strategy in terms of EE under an arbitrary total power constraint as $2KP_s+P_r=P_T$. This can be easily verified by checking the following two facts.\\
i)~~If $2KP_s\geq P_r$, i.e., $0<P_r\leq\frac{P_T}{2}$, $$\eta_{SE}=K\log2\bigg(1+\frac{P_r\pi N\sigma_0^2}{8K(1+P_r\epsilon_0^2)}\bigg)$$ is an increasing function of $P_r$. While when $P_r+2KP_s=P_T$, the total power consumption is constant, thus the maximal SE directly yields the maximal EE. Hence, the maximal EE is achieved when $P_r=\frac{P_T}{2}$, equivalently $P_r=2KP_s$.\\
ii)~~If $2KP_s\leq P_r$, i.e., $0<2KP_s\leq\frac{P_T}{2}$, it is checked that
$\eta_{SE}$ also increases with $P_s$. Therefore, the maximum EE is achieved when $2KP_s=P_r=\frac{P_T}{2}$.
\end{proof}

\section{Proposition \ref{Ps_star}}\label{optimal_Ps}
\begin{mypro}\label{Ps_star}
For fixed $K$ and $N$, $\eta_{EE}$ in \eqref{eq:EE2} is quasi-concave with respect to $P_s$.
\end{mypro}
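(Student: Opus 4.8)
The plan is to recognize $\eta_{EE}(P_s)$ in \eqref{eq:EE2} as a ratio $f(P_s)/g(P_s)$ with numerator $f(P_s)=K\log_2\big(1+a_0 P_s/(a_1 P_s+4)\big)$ and denominator $g(P_s)=2KP_s/\kappa+P_c$, and then to invoke the classical fact from \cite{boydconvex} that, on a convex domain, the ratio of a nonnegative concave function over a strictly positive convex function is quasi-concave. I would recall the underlying reason in one line: for every threshold $t\ge 0$ the superlevel set $\{P_s>0:\,f(P_s)/g(P_s)\ge t\}$ equals $\{P_s>0:\,f(P_s)-t\,g(P_s)\ge 0\}$, and $f-t g$ is concave whenever $f$ is concave and $g$ convex with $t\ge 0$, so all such superlevel sets are convex (for $t<0$ the set is the whole interval $P_s>0$, since $f\ge 0$ and $g>0$).

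Two of the three ingredients are immediate: the domain $P_s>0$ is an interval, hence convex; $g$ is affine, hence convex, and strictly positive because $P_c=2KP_0+P_{const}+2KNP_{APS}>0$; and $f\ge 0$ since the argument of its logarithm is at least one. The remaining and essential step is to prove that $f$ is concave on $P_s>0$. For this I would first simplify $1+a_0P_s/(a_1P_s+4)=\big((a_0+a_1)P_s+4\big)/(a_1P_s+4)$, so that
\[
f(P_s)=\frac{K}{\ln 2}\Big[\ln\big((a_0+a_1)P_s+4\big)-\ln\big(a_1P_s+4\big)\Big],
\]
a difference of logarithms of affine functions, and then differentiate twice to obtain
\[
f''(P_s)=\frac{K}{\ln 2}\left[\frac{a_1^{2}}{(a_1P_s+4)^{2}}-\frac{(a_0+a_1)^{2}}{\big((a_0+a_1)P_s+4\big)^{2}}\right].
\]

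The only real obstacle is verifying $f''(P_s)<0$ for all $P_s>0$. Since every quantity in sight is positive, this inequality is equivalent to $a_1/(a_1P_s+4)<(a_0+a_1)/\big((a_0+a_1)P_s+4\big)$; cross-multiplying and cancelling the common term $a_1(a_0+a_1)P_s$ collapses it to $4a_1<4(a_0+a_1)$, i.e. to $a_0>0$, which holds because $a_0=\pi N\sigma_0^{2}>0$. The degenerate case $a_1=0$ (perfect CSI) is even easier, since $f$ is then just a logarithm of an affine map and hence concave. With $f$ concave, $g$ positive and convex, and the domain convex, the ratio lemma delivers the claimed quasi-concavity of $\eta_{EE}(P_s)$. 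In short, essentially all of the work reduces to the elementary reduction of $f''<0$ to $a_0>0$; no delicate estimates are involved.
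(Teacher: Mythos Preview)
Your argument is correct and follows essentially the same route as the paper: write $\eta_{EE}=f/g$ with $f$ the spectral-efficiency numerator and $g$ the affine power term, verify $f''(P_s)<0$, and invoke the concave-over-convex ratio criterion from \cite{boydconvex}. Your treatment is in fact slightly more complete, since you spell out the superlevel-set justification and the needed nonnegativity/positivity of $f$ and $g$, and your reduction of $f''<0$ to the single inequality $a_0>0$ is exactly the factorization underlying the paper's displayed expression for $f''$.
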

\begin{proof}
Define $f_1(P_s)=K\log_2\bigg(1+\frac{a_0 P_s}{a_1 P_S+4}\bigg)$, and let $g_1(P_s)=\frac{2KP_s}{\kappa}+P_{c}$. Then \eqref{eq:EE2} can be rewritten as
\begin{equation}
\eta_{EE}=\frac{f_1(P_s)}{g_1(P_s)}.
\end{equation}
The second order derivative of $f_1(P_s)$ is checked as
\begin{align}
f''_1(P_s)&=-\frac{4a_0 K}{\ln2}\frac{(a_0+a_1)(a_1 P_s+4)+a_1[(a_0+a_1)P_s+4]}{[(a_0+a_1)P_s+4]^2(a_1 P_s+4)^2}\nonumber\\
&<0,
\end{align}
which means $f_1(P_s)$ is concave with regard to $P_s$. On the other hand, $g_1(P_s)$ is linear, and hence it is convex with respect to $P_s$. Further since both $f_1(P_s)$ and $g_1(P_s)$ are differentiable and from \cite{boydconvex}, $\eta_{EE}$ is quasi-concave w.r.t. to $P_s$.
\end{proof}

\end{document}